
\documentclass[11pt]{article}

\usepackage[utf8]{inputenc} 
\usepackage[T1]{fontenc}    
\usepackage{times}
\usepackage{url}            
\usepackage{booktabs}       
\usepackage{amsfonts}       
\usepackage{nicefrac}       
\usepackage{microtype}      
\usepackage{bookmark}
\usepackage[margin=1in]{geometry}

\usepackage{algorithm}
\usepackage{algorithmic}

\usepackage{amsmath,amsthm}
\usepackage{bm}

\DeclareMathOperator*{\argmin}{argmin}

\newcommand{\X}{\mathbf{X}}

\newcommand{\Expect}[1]{\mathbb{E}\left[ #1\right]}
\newcommand{\tr}[1]{\textrm{tr}\left(#1\right)}

\usepackage{graphicx}

\renewcommand{\hat}{\widehat}
\newcommand{\norm}[1]{\left\lVert #1 \right\rVert}
\newcommand{\snorm}[1]{\lVert #1 \rVert}
\newtheorem{theorem}{Theorem}
\newtheorem{lemma}{Lemma}
\newtheorem{prop}{Proposition}
\newcommand{\R}{\mathbb{R}}

\usepackage[sort&compress,numbers]{natbib}
\usepackage{amsmath,amssymb}
\usepackage[dvipsnames]{xcolor}


\usepackage{xspace}
\newcommand{\methodname}{{SuffPCR}\xspace}
\newcommand{\superPC}{SPC\xspace}
\newcommand{\bstar}{\beta^*}
\renewcommand{\top}{\mathsf{T}}

\usepackage{mathtools}
\mathtoolsset{showonlyrefs}

\usepackage{hyperref}       
\hypersetup{colorlinks=true,citecolor=blue,linkcolor=blue,urlcolor=blue}
%
%
%
%
%

%


\usepackage{pifont}
\newcommand{\cmark}{\ding{52}}%
\newcommand{\xmark}{\ding{55}}%

\newcommand{\rev}[1]{\textcolor{Black}{#1}}



\begin{document}

\title{Sufficient principal component regression for pattern discovery in
  transcriptomic data} 
\author{
  Lei Ding\\
  Department of Statistics\\
  Indiana University
  \and
  Gabriel E. Zentner\\
  Department of Biology\\
  Melvin and Bren Simon Comprehensive Cancer Center\\
  Indiana University
  \and
  Daniel J. McDonald\\
  Department of Statistics\\
  University of British Columbia
 }

\maketitle

\begin{abstract}
  \noindent\textbf{Motivation:} 
  Methods for global measurement of transcript abundance such as microarrays and
  RNA-Seq generate datasets 
  in which the number of measured features far exceeds the number of
  observations. 
  Extracting biologically meaningful  
  and experimentally tractable insights from such data therefore requires
  high-dimensional prediction. Existing  
  sparse linear approaches to this challenge have been stunningly successful,
  but 
  some important issues remain.  
  These methods can fail to select the correct features, predict poorly relative
  to non-sparse alternatives,  
  or ignore any unknown grouping structures for the features.
  \vspace{6pt}
 
  \noindent\textbf{Results:} 
We propose a method called \methodname that yields improved
predictions in high-dimensional
tasks including regression and classification, especially 
in the typical context of omics with correlated features. 
\methodname first estimates sparse principal components and then estimates a linear model on the recovered subspace. 
Because the estimated subspace is sparse in the features, the resulting predictions will depend on only a small subset of genes.
\methodname works well on a variety of simulated and experimental transcriptomic
data, performing nearly optimally when the model assumptions are
satisfied. We also demonstrate near-optimal theoretical guarantees.\vspace{6pt}

\noindent\textbf{Key words:}
Algorithms; Feature selection; Prediction; Classification\vspace{6pt}

\noindent\textbf{Availability:} 
Code and raw data are freely available at
\url{https://github.com/dajmcdon/suffpcr}.
\rev{Package documentation may be viewed at
  \url{https://dajmcdon.github.io/suffpcr}.}\vspace{6pt}

\noindent\textbf{Contact:} \href{daniel@stat.ubc.ca}{daniel@stat.ubc.ca}\vspace{6pt}
\end{abstract}

\section{Introduction}
\label{sec:intro}

Global transcriptome measurement with microarrays and RNA-Seq is a
staple approach in many areas of biological research and has yielded
numerous insights into gene regulation. Given data from such
experiments, it is often desirable to identify a small number of
transcripts whose expression levels are associated with a phenotype of
interest (for instance, disease-free survival of cancer
patients). Indeed, projects such as The Cancer Genome Atlas (TCGA)
have aimed to generate massive volumes of such data to enable
molecular characterization of various cancers. While these data are
readily available, their high-dimensional nature (tens of thousands of
transcript measurements from a single experiment) makes identification
of a compact gene expression signature statistically and
computationally challenging. While the identification of a minimal
gene expression signature is valuable in evaluating disease prognosis,
it is also helpful for guiding experimental exploration. In practical
terms, a set of five genes highly associated with a certain disease
phenotype can be characterized more rapidly, at lower cost, and in
more depth than a set of 50 or 100 such genes using genetic techniques
such as CRISPR knockout and cancer biological methods such as
xenotransplantation of genetically-modified cells into
mice. Therefore, this paper prioritizes selecting a small
subset of transcript measurements which still provide accurate
prediction of phenotypes. 

With these goals in mind, supervised linear regression techniques such as ridge
regression \citep{hoerl1970ridge}, the lasso \citep{Tibshirani1996},  elastic
net \citep{zou2005regularization},
or other penalized methods are often employed. More commonly,
especially in genomics applications, 
the outcomes of interest tend to be the result of groups of genes,
which perhaps together describe more complicated processes. Therefore, researchers often turn to
unsupervised methods such as principal component
analysis (PCA), principal component regression (PCR), and partial
least squares (PLS) for both preprocessing and as predictive models~\citep[e.g.,][]{cera2019genes,Harel297,kabir2017identifying,Traglia979}.

\rev{
%
In genomics, one may collect expression measurements for thousands of
genes from microarrays or RNA-Seq with the goal of predicting
phenotypes or class outcomes.
In these settings, the number of patients is much smaller than the number
of gene measurements, and researchers are interested in (1)
the accurate prediction of the phenotype, (2) the correct identification of a
handful of predictive genes, and (3) computational tractability.  
Among these properties, the correct identification of a small
number of predictive genes is of crucial importance in practice, since it can
lead biologists to further investigate specific genes through
CRISPR knockout or other techniques. 
It is this genetic pattern discovery for which our proposed methodology is 
intended: data with many more measurements than observations; the potential that
some of 
the measurements may be grouped or correlated; the existence
of either a continuous or discrete outcome we wish to predict; and the belief
that these predictions only depend on some small
collection of groups rather than the entire set of measurements.
}


\subsection{Recent related work}
\label{sec:review}

PCA has two main drawbacks when used in high dimensions. 
The first is that PCA is non-sparse, so it uses information from all
the available genes instead of selecting only those which are important, a
key objective in omics applications. That is, the right singular
vectors or ``eigengenes''~\citep{Alter2000} depend on all the genes
measured rather than a small collection. The second is that these
sample principal components are not consistent estimators of the  
population parameters in high dimensions
\citep{johnstone2009consistency}. This means essentially that when the
number of patients is smaller than the number of genes, even if the first
eigengene could perfectly explain the data, PCA will not be able to
recover it.

Modern approaches specifically for \rev{pattern discovery in the genomics
  context} such
as supervised gene shaving~\citep{HastieTibshirani2000}, tree
harvesting~\citep{HastieTibshirani2001}, and supervised principal
components
(\superPC)~\citep{bair2004semi,bair2006prediction,paul2008preconditioning}
seek to combine the presence of the phenotype with the structure
estimation properties of eigendecompositions on the gene expression measurements
using unsupervised techniques to obtain the best of both.  PLS is common in
genomics~\citep[e.g.]{chakraborty2019,LeekStorey2007},
though it remains uncommon
in statistics and machine learning, and its theoretical properties are
poorly understood.
Other recent PCA-based approaches for genetics, though not directly applicable
for prediction are SMSSVD~\citep{HenningssonFontes2018} and
ESPCA~\citep{MinLiu2018}.

\subsection{Contributions}

In this paper, we leverage the strong theoretical
properties associated with sparse PCA to improve predictive
accuracy for regression and classification problems in genomics. 
We avoid the strong assumptions necessary for \superPC, the current
state-of-the-art, while obtaining the benefits associated with sparse subspace
estimation.  
In the case that the phenotype is actually generated as a linear
function of a handful of genes,  our
method, \methodname, performs nearly optimally: it does as well as if
we had known which genes were relevant beforehand. 
Furthermore, we justify theoretically that our procedure can both predict
accurately and recover the correct genes.  
Our contributions can be succinctly summarized as follows:
\begin{enumerate}
    \item We present a methodology for discovering small sets of
      predictive genes using sparse PCA; 
    \item Our method improves the computational properties of existing
      sparse subspace estimation approaches to enable previously
      impossible inference when the number of genes is very large;
    \item We demonstrate state-of-the-art performance of our method in synthetic
      examples and with 
      standard cancer microarray measurements;
    \item We provide near-optimal theoretical guarantees.
\end{enumerate}
\rev{Our methodology can be used in a variety of genomic pattern discovery
  settings. One such example is a modified version of traditional differential
  expression analysis. If we have treatment and control measurements, the
  logistic version of our method is appropriate with the advantage that it
  examines the impact of one gene adjusted for the contributions of others.
  Additionally, with a continuous treatment, the detection power can be
  increased relative to using an artificial dichotomization.}
  
In \autoref{sec:background}, we motivate the desire for
\emph{sufficient} PCR relative to previous approaches and present details of 
\methodname. \autoref{sec:empirical} illustrates performance in simulated,
semi-simulated, 
and real examples and discusses the biological implications of our
methods for a selection of cancers. \autoref{sec:theory} gives
theoretically justifies our methods, providing guarantees for 
prediction accuracy and correct gene selection. \autoref{sec:con} concludes.

\paragraph{Notation.} 
We use bold uppercase letters to denote matrices, lowercase Arabic letters
to denote row vectors and scalars, and uppercase Arabic letters for
for random variables.
\rev{
  Let $Y$ be a random, real-valued $n$-vector of independent variables $Y_i$, and
$\mathbf{X}$ be the rowwise concatenation of i.i.d.\ draws $X_i$ from a
distribution on $\R^p$ with covariance $\boldsymbol{\Sigma}$.
We denote the observed realization of the outcome variable $Y$ as
  $y\in\R^n$. To be explicit in the genomics context, $\X$ is an $n\times p$
  matrix where each row is a set of
  transcriptomic measurements from RNA-Seq or microarrays for a patient while
  $y_i$ is an observed phenotype of interest for the $i^{\textrm{th}}$ patient.
Because $\X$ is a matrix, this symbol represents both a random
matrix and its realization. In the following, the meaning should be clear from
the context.
} 
We
assume, without loss of generality, that $\Expect{X_i}=0$,
and that the measurements $\mathbf{X}$ has been centred.
The singular value decomposition (SVD) of a matrix $\mathbf{A}$
is $\mathbf{A} = \mathbf{U}(\mathbf{A}) \boldsymbol{\Lambda}(\mathbf{A})
\mathbf{V}^\top(\mathbf{A})$.  
In the specific case of $\X$, we suppress the dependence on $\X$ in the notation
and write 
$\X = \mathbf{U} \boldsymbol{\Lambda} \mathbf{V}^\top$.
We write $\mathbf{A}_d$ to indicate the first $d$ columns of the
matrix $\mathbf{A}$ and $a_j$ to denote the $j^{th}$ row. In the
case of the identity matrix, we use a subscript to denote its
dimension when necessary: $\mathbf{I}_p$. Let
$\mbox{tr}(\mathbf{A})$ denote the sum of the diagonal entries of
$\mathbf{A}$ while $\snorm{\mathbf{A}}_F^2 = \sum_{ij} a_{ij}^2$ is the squared
Frobenius norm of $\mathbf{A}$. 
$\snorm{\mathbf{A}}_{2,0}$ denotes $(2,0)$-norm of $\mathbf{A}$,
that is the number of rows in $\mathbf{A}$ that have 
non-zero $\ell_2$ norm.
$\snorm{\mathbf{A}}_{1,1}$ is the sum of the row-wise
$\ell_1$-norms. 
\rev{Finally,  $\mathbf{1}(a)$ is the indicator function for the expression
$a$, taking value 1 if $a$ is true or 0 if not.}

\section{Motivation and methodology}
\label{sec:background}


Supervised Principal Components
\citep{bair2004semi,bair2006prediction,paul2008preconditioning} is widely used
for solving high-dimensional prediction and feature selection problems. 
It targets dimension reduction and sparsity simultaneously by first
screening genes (or individual mRNA probes) based on their marginal correlation with the
phenotype (or likelihood ratio test in the case of non-Gaussian
noise). Then, it performs PCA on this selected subset
and regresses the phenotype on the resulting components (possibly with additional penalization). 
This procedure is computationally simple, but, zero population marginal
correlation is neither necessary nor sufficient to guarantee that the
associated population regression coefficient is zero. To make this statement mathematically precise,
consider the linear model 
$
Y_i = X_i^\top \bstar + \epsilon_i,
$
where $Y_i$ is a real-valued scalar phenotype, $X_i$ is real-valued vector
of genes, $\bstar$ is the true (unknown) coefficient vector, and $\epsilon_i$
is a mean-zero error. Defining \rev{as above} $\textrm{Cov}(X_i,X_i) =
\boldsymbol{\Sigma}$, and $\textrm{Cov}(X_i,Y_i)=\Phi$, \rev{then, for this
procedure to correctly recover the true nonzero components of $\bstar$, it
requires}
\begin{equation}
  \label{eq:assumption}
\Phi_j=0 \Rightarrow
\bstar_{j}=(\boldsymbol{\Sigma}^{-1}\Phi)_j=0.
\end{equation}
In words, we assume that the dot product of the $j^{\textrm{th}}$ row of
the precision matrix with the marginal covariance between $x$ and $y$
is zero whenever the $j^{\textrm{th}}$ element of $\Phi$ is zero.
\rev{While reasonable
in some settings, this assumption frequently fails. For example, individual features may
only be predictive of the response in the presence of other features.}
To illustrate why this assumption fails for genomics problems, we
examine a motivating counterexample. Using mRNA measurements for acute
myeloid leukemia (AML, 
\citealt{bullinger2004gene}), we estimate both $\boldsymbol{\Sigma}^{-1}$ and
$\Phi$ and \rev{proceed as if these estimates are the true population quantities.}
To estimate $\Phi$, we use the empirical covariance and set all but the largest
$n=116$ values equal to zero, corresponding to an \rev{extremely sparse estimate.}
For $\boldsymbol{\Sigma}^{-1}$, we use the Graphical
Lasso~\citep{FriedmanHastie2008} for all $p=6283$ genes at different
sparsity levels ranging from 100\% sparse
($\hat{\boldsymbol{\Sigma}}^{-1}_{ij}=0$ for all $i\neq j$) to 95\%
sparse. We then create \rev{a pseudotrue}
$\bstar=\widehat{\boldsymbol{\Sigma}}^{-1}\widehat{\Phi}$ as 
in Equation~\eqref{eq:assumption}. This is essentially the most favorable
condition for SPC. \rev{To reiterate, in order to evaluate this assumption, we
  create $\bstar$ based on estimates from real genetics data that
  are highly sparse. But, as we will see below, because the inverse covariance
  matrix is not ``sparse in the right way'', SPC will have a very high false
  negative rate and ignore important genes.}

\begin{table}
  \centering
\begin{tabular}{@{}lrrrrrr@{}}
  \toprule
\% sparsity of $\widehat{\boldsymbol{\Sigma}}^{-1}$&  100 & 99.9 & 99.6 & 98.9 & 97.5 & 95.3 \\
  \midrule
\% non-zero $\bstar$'s& 1.8 & 3.3 & 8.4 & 23.5 & 50.2 & 77.9 \\ 
  False negative rate & 0.000 & 0.431 & 0.778 & 0.921 & 0.963 & 0.976 \\ 
\bottomrule
\end{tabular}
\caption{Illustration of the failure of Equation \eqref{eq:assumption} on the
AML data.}
  \label{tab:assumption}
\end{table}

\autoref{tab:assumption} shows the sparsity of $\hat{\mathbf{\Sigma}}^{-1}$, the
percent of non-zero regression coefficients, and the percent of non-zero
regression coefficients which are incorrectly ignored under the assumption (the
false negative rate).  
Even if the precision matrix is 99.9\%  sparse, the false
negative rate is over 40\%, meaning we find fewer than 60\% of
the true genes. If the sparsity of $\hat{\mathbf{\Sigma}}^{-1}$
is allowed to decrease only slightly, the false negative rate
increases to over 95\%. Clearly, this screening procedure will
ignore many important genes in even the most favorable conditions for
SPC.

More recent work has attempted to avoid this assumption.
\citet{ding2017predicting} uses the initially selected set of features to
approximate the information lost in the screening step via techniques from
numerical linear algebra. An alternative discussed in
\citet{piironen2018iterative} iterates the screening step with the prediction
step, adding back features which correlate with the residual. Finally,
\citet{tay2018principal} assumes that feature groupings are known and
and estimates separate subspaces for different
groups.
All these methodologies are tailored to perform well when $\Phi$ and $\bstar$
have particular compatible structures.  




On the other hand, it is important to observe that a sufficient
condition for $\bstar_j=0$ in Equation \eqref{eq:assumption} is that
the $j^{\textrm{th}}$ row of the left eigenvectors of $\boldsymbol\Sigma$ is
0. 
Based on this intuition, we develop sufficient PCR (abbreviated as \methodname)
which leverages this insight: row sparse eigenvectors 
imply sparse coefficients, and hence depend on only a subset of genes.
\methodname is tailored to the
case that $\X$ lies approximately on a low-dimensional linear manifold
which depends on a small subset of features. Because the linear
manifold depends on only some of the features, $\bstar$ does as well. 

\subsection{Prediction with principal components}
\label{sec:explainPC}

PCA is a canonical unsupervised
dimension reduction method when it is reasonable to imagine that $\X$ lies on (or near) a low-dimensional linear manifold. It finds the best $d$-dimensional approximation 
of the span of $\X$ such that the reconstruction error
in $\ell_2$ norm is minimized. This problem is equivalent to maximizing the  variance explained by the projection:
\begin{align}
\label{eq:opt}
  \max_{\mathbf{V}} &\quad\tr{\mathbf{S} \mathbf{V} \mathbf{V}^\top}&
 \mbox{subject to} &\quad  \mathbf{V}^\top \mathbf{V} = \mathbf{I}_d,
\end{align}
where $\mathbf{S} = \frac{1}{n}\X^\top \X$ is the sample covariance matrix. 
Let $\X = \mathbf{U} \boldsymbol{\Lambda} \mathbf{V}^\top$,
then the solution of this optimization problem is
$\mathbf{V}_d$, the first $d$ right singular vectors,
and the estimator of the first $d$ principal components is
$\mathbf{U}_d \boldsymbol{\Lambda}_d$ 
or $\X \mathbf{V}_d$ equivalently. 
Given an estimate of the principal components, principal component
regression (PCR) is simply ordinary least squares (OLS) regression of the phenotype on the derived components $\mathbf{U}_d \boldsymbol{\Lambda}_d$.
One can convert the lower-dimensional estimator, say $\hat\gamma$, back to the original space to reacquire an estimator of $\bstar$ as 
$\hat{\beta} = \mathbf{V}_d \hat{\gamma}_{d}$. Other generalized
linear models can be used place of OLS to find $\hat\gamma$.

\subsection{Sparse principal component analysis}
\label{sec:fps}

As discussed in \autoref{sec:review}, standard PCA works poorly in high dimensions. 
Much like the high-dimensional regression problem, estimating high-dimensional principal components is ill-posed without additional structure. 
To address this issue many authors have focused on different sparse PCA estimators for the case when $\mathbf{V}$ is sparse in some sense.
Many of these methods achieve this goal by adding a penalty to Equation \eqref{eq:opt}. 
Of particular utility for the case of PCR when $\bstar$ is sparse is to choose a penalty that results in row-sparse $\mathbf{V}$. This intuition is justified by the following result.
\begin{prop}
  Consider the linear model $Y_i = X_i^\top \bstar + \epsilon$ with $\mathrm{Cov}(X_i,X_i) = \boldsymbol{\Sigma}$.
  Let $\boldsymbol{\Sigma} =
  \mathbf{V}(\boldsymbol{\Sigma})\boldsymbol{\Lambda}(\boldsymbol{\Sigma})
  \mathbf{V}(\boldsymbol{\Sigma})^\top$ be the eigendecomposition of $\boldsymbol{\Sigma}$ with
  $\boldsymbol{\Lambda}(\boldsymbol{\Sigma})_{jj}=0$ for $j>d \in \mathbb{Z}^+$. Then
  $\snorm{v(\boldsymbol{\Sigma})_j}_2=0 \Rightarrow \bstar_j=0.$
  \label{prop:1}
\end{prop}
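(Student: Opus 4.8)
The plan is to show that the (identifiable) coefficient vector $\bstar$ lies entirely in the span of the $d$ eigenvectors of $\boldsymbol{\Sigma}$ with nonzero eigenvalues, so that each coordinate $\bstar_j$ is simply a linear read-out of the corresponding row of that eigenvector matrix; row-sparsity then transfers directly to coefficient-sparsity. First I would record the population normal equation implied by the model: since $\epsilon$ is mean-zero and uncorrelated with $x$ and $\X$ is centered, $\Phi = \textrm{Cov}(x,y) = \boldsymbol{\Sigma}\bstar$. Because $\boldsymbol{\Lambda}(\boldsymbol{\Sigma})_{jj}=0$ for $j>d$, the matrix $\boldsymbol{\Sigma}$ is rank-deficient, so the literal inverse appearing in Equation \eqref{eq:assumption} does not exist; I would therefore take $\bstar$ to be the identifiable (minimum-norm) solution $\bstar = \boldsymbol{\Sigma}^{+}\Phi$, where $\boldsymbol{\Sigma}^{+}$ is the Moore--Penrose pseudoinverse, which is the natural extension of $\boldsymbol{\Sigma}^{-1}\Phi$ to the singular case.

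Next I would exploit the eigenstructure. Writing $\mathbf{V}_d$ for the $p\times d$ matrix of eigenvectors with nonzero eigenvalues and $\boldsymbol{\Lambda}_d$ for the corresponding invertible $d\times d$ diagonal block, the vanishing tail of $\boldsymbol{\Lambda}(\boldsymbol{\Sigma})$ yields the reduced factorization $\boldsymbol{\Sigma} = \mathbf{V}_d \boldsymbol{\Lambda}_d \mathbf{V}_d^\top$, and hence $\boldsymbol{\Sigma}^{+} = \mathbf{V}_d \boldsymbol{\Lambda}_d^{-1} \mathbf{V}_d^\top$. Substituting gives $\bstar = \mathbf{V}_d \boldsymbol{\Lambda}_d^{-1} \mathbf{V}_d^\top \Phi = \mathbf{V}_d c$ with $c := \boldsymbol{\Lambda}_d^{-1}\mathbf{V}_d^\top\Phi \in \R^d$. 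This is the crux: $\bstar$ is a linear combination of the columns of $\mathbf{V}_d$, so it inherits the row-support of $\mathbf{V}_d$. Reading off coordinates, $\bstar_j = (\mathbf{V}_d c)_j = v(\boldsymbol{\Sigma})_j^\top c$, the inner product of the $j$-th row of $\mathbf{V}_d$ with $c$; if $\norm{v(\boldsymbol{\Sigma})_j}_2 = 0$ then that row is the zero vector and $\bstar_j = 0$, as claimed.

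The only genuine obstacle is interpretive rather than computational: the singularity of $\boldsymbol{\Sigma}$ forces a careful choice of $\bstar$. One must pass from $\boldsymbol{\Sigma}^{-1}$ to $\boldsymbol{\Sigma}^{+}$ and verify that $\textrm{range}(\boldsymbol{\Sigma}^{+}) = \textrm{range}(\boldsymbol{\Sigma}) = \textrm{span}(\mathbf{V}_d)$, which is what guarantees $\bstar$ has no component outside $\mathbf{V}_d$; conceptually this is forced by the model itself, since $x$ lies in $\textrm{range}(\boldsymbol{\Sigma})$ almost surely and so $x^\top\bstar$ depends only on the projection of $\bstar$ onto $\textrm{span}(\mathbf{V}_d)$. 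One must also read the hypothesis $\norm{v(\boldsymbol{\Sigma})_j}_2=0$ as a statement about the rows of the reduced matrix $\mathbf{V}_d$, since the eigenvectors attached to the zero eigenvalues are non-unique and contribute nothing to $\bstar$. With that convention fixed, the remaining steps are immediate.
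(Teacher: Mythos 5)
Your proof is correct and takes essentially the same route as the paper's one-line argument: both derive $\bstar$ from $\Phi$ by applying the (pseudo)inverse of $\boldsymbol{\Sigma}$, whose $j$-th row vanishes whenever the $j$-th row of $\mathbf{V}_d$ does. If anything, your version is more careful than the paper's, which writes $\boldsymbol{\Sigma}^{-1}$ even though $\boldsymbol{\Sigma}$ is singular under the hypothesis $\boldsymbol{\Lambda}(\boldsymbol{\Sigma})_{jj}=0$ for $j>d$; your explicit use of $\boldsymbol{\Sigma}^{+}=\mathbf{V}_d\boldsymbol{\Lambda}_d^{-1}\mathbf{V}_d^\top$ and the identifiability convention $\bstar=\boldsymbol{\Sigma}^{+}\Phi$ (without which the claim would be false, since any kernel component of $\bstar$ is unconstrained) makes rigorous what the paper leaves implicit.
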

\noindent The proof is immediate. 
For any $j$, if $\snorm{v(\boldsymbol{\Sigma})_j}_2=0$, then every
element in $v(\boldsymbol{\Sigma})_j$ is 0, indicating the $j^{\textrm{th}}$
row of $\boldsymbol{\Sigma}^{-1}$ will be 0.
Since
$\bstar_{j}=(\boldsymbol{\Sigma}^{-1}\Phi)_j$ where
$\textrm{Cov}(X_i,y_i)=\Phi$, it also results in $\bstar_j=0$.
This result stands in stark contrast to the
assumption in Equation \eqref{eq:assumption}. This proposition gives a guarantee rather
than requiring an assumption: if the rows of $\mathbf{V}_d$ are sparse,
then $\beta_*$ is sparse. 
The same intuition can easily be extended to the case
$\boldsymbol{\Lambda}(\boldsymbol{\Sigma})_{jj}\geq 0$ for all $j$
given a gap between the $d^{\textrm{th}}$ and $(d+1)^{\textrm{st}}
$ eigenvalues.
In this setting, the natural analogue of PCA is the solution to:
\begin{align}
\label{eq:sparsePC}
    \max_{\mathbf{V}} &\quad \tr{\mathbf{S}\mathbf{V} \mathbf{V}^\top} - \lambda \norm{\mathbf{V}}^2_{2,0} &
    \mbox{subject to} &\quad  \mathbf{V}^\top \mathbf{V} = \mathbf{I}_d.
\end{align}
Solutions $\mathbf{\hat V}_d$ of Equation \eqref{eq:sparsePC} will give projection matrices onto the best $d$-dimensional linear manifold such that $\mathbf{\hat V}_d$ is row sparse.
However, this problem is NP-hard.

Many authors have developed different versions of sparse PCA. For
example, \citet{d2005direct} and \citet{zou2006sparse} focus on the first principal component and add additional principal components iteratively to account for the variation left unexplained by the previous principal components.
\citet{VuLei2013} derives a rate-minimax lower bound, illustrating that no estimator can approach the population quantity faster than, essentially, $q\sqrt{d/n}$ where $q$ is a deterministic function of $\boldsymbol{\Sigma}$.
Later work in \citet{vu2013fantope} proposes a convex relaxation to Equation \eqref{eq:sparsePC} which finds the first $d$ principal components simultaneously and nearly achieves the lower bound:
\begin{align}
\label{eq:fps}
    \max_{\mathbf{V}} &\quad\tr{\mathbf{S} \mathbf{V} \mathbf{V}^\top} - \lambda \snorm{\mathbf{V}\mathbf{V}^\top}_{1,1} &
    \text{subject to}  &\quad\mathbf{V}\mathbf{V}^\top \in \mathcal{F}^d,
\end{align}
where
$
    \mathcal{F}^d := \{\mathbf{V} \mathbf{V}^\top: \mathbf{0}  \preceq\mathbf{V} \mathbf{V}^\top \preceq \mathbf{I}_p \text{ and } \mbox{tr}(\mathbf{V} \mathbf{V}^\top)=d \}
$
is a convex body called Fantope, motivating the name Fantope Projection and
Selection (FPS). 
The authors solve the optimization problem in Equation \eqref{eq:fps} with an
alternating direction method 
of multipliers (ADMM) algorithm. 

For these reasons, FPS is known as the current state-of-the-art sparse PCA
estimator with the best performance. 
However, despite its theoretical justification, FPS is less useful in
practice for solving prediction tasks, especially 
in genomics applications 
with $p\gg n$ (rather than just $p>n$) for two reasons. 
First, the original ADMM algorithm has
per-iteration computational complexity $\mathcal{O}(p^3)$, which is a burden especially when $p$ is large. 
Secondly, because of the convex 
relaxation using Equation~\eqref{eq:fps} rather than Equation~\eqref{eq:sparsePC}, $\mathbf{\hat V}_d$ from FPS tends to be entry-wise sparse, but
infrequently row-wise sparse unless the signal-to-noise ratio (SNR) is very large
($q$ is a function of this ratio).
\rev{We give explicit formulas for the SNR under this model in the Supplement,
  but heuristically, the SNR captures how well the data is described by a
  $d$-dimensional subspace through the relative magnitude of
  $\tr{\mathbf{\Lambda}_d}$ compared to $p$.}
In genomics applications with low SNR, which is common, estimates $\hat\beta$ tend to have large numbers of non-zero
coefficients with very small estimated values.
Thus we design \methodname based on the insights from Proposition 1, utilizing the best sparse PCA estimator FPS, and further addressing both of these issues to achieve better empirical performance while maintaining theoretical justification. 

\subsection{Sufficient principal component regression}
\label{sec:method}

\begin{algorithm}[tb]
  \begin{algorithmic}[1]
    \STATE {\bfseries Input:} $\mathbf{X}$, $\mathbf{S}$, $y$, $d$, $\lambda$.
    \STATE $\mathbf{B} \leftarrow \mathbf{0}, \mathbf{C} \leftarrow \mathbf{0}$
    \COMMENT{Initialization}  
    \WHILE{not converged} 
    \STATE $\mathbf{A} \leftarrow \mbox{Proj}_{\mathcal{F}^d}\left(\mathbf{B} -
      \mathbf{C} + \mathbf{S}/\lambda\right)$ 
    \COMMENT{Approximate projection}
    \STATE $\mathbf{B} \leftarrow \mbox{Soft}(\mathbf{A} + \mathbf{C})$
    \COMMENT{Elementwise soft-thresholding} 
    \STATE $\mathbf{C} \leftarrow \mathbf{C} + \mathbf{A} - \mathbf{B}$
    \ENDWHILE
    \STATE Decompose $\mathbf{B} =
    \mathbf{V}_d\boldsymbol{\Lambda}_d\mathbf{V}^\top_d$
    \COMMENT{Rank $d$ eigendecomposition}
    \STATE Compute $l = \textrm{diag}(\mathbf{V}_d \mathbf{V}^\top_d)$, sort in
    descending order 
    \STATE Choose $t$ by applying \autoref{alg:find_t} to $l$
    \STATE Set rows in $\mathbf{V}_d$ whose $\ell_2$ norm is smaller
    than $t$ as 0, and get $\hat{\mathbf{V}}_d$
    \STATE Solve $\hat\gamma = \argmin_\gamma
    \lVert y-\mathbf{X}\hat{\mathbf{V}}_d\gamma \rVert_2^2$
    \STATE {\bf Return:} $\hat\beta = \hat{\mathbf{V}}_d\hat\gamma$
     \end{algorithmic}
  \caption{\methodname (regression version)}
  \label{alg:suffpcr}
\end{algorithm}

\begin{figure}
  \centering
  \includegraphics[width=.95\linewidth]{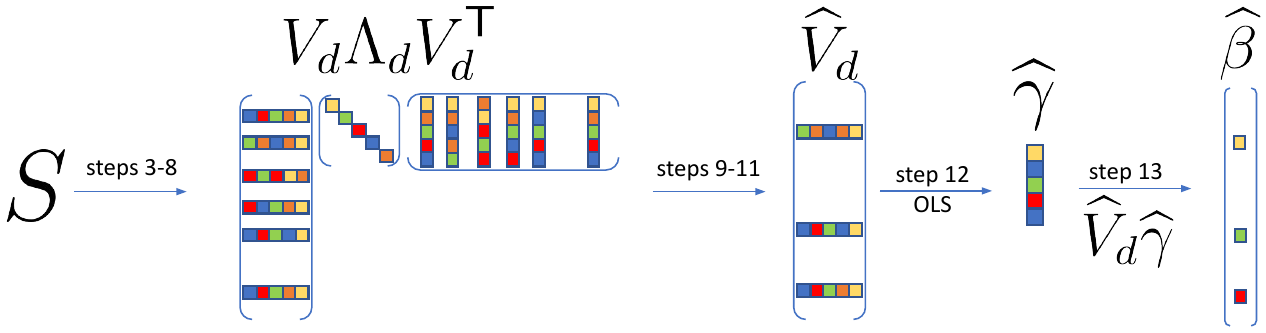}
  \caption{Graphical depiction of \autoref{alg:suffpcr}. Solid colours
    represent nonzero matrix entries.}
  \label{fig:algo1}
\end{figure}

  

In this section, we introduce \methodname.
The main idea of \methodname is to detect the relationship between the phenotype
$Y$ and gene expression measurements $\X$ by making use of the (near)
low-dimensional manifold that supports $\X$. 
In broad outline, \methodname first uses a tailored version of FPS to produce a
row-sparse estimate $\hat{\mathbf{V}}_d$, and then regresses $Y$ on the derived
components to produce sparse coefficient estimates.  
\methodname for regression is stated in \autoref{alg:suffpcr} and
summarized visually in \autoref{fig:algo1}.
For ease of exposition, we remind the reader that $Y$ and $\X$ are standardized so that $\mathbf{S} = \frac{1}{n} \X^\top \X$ is the correlation matrix. 
%
%

The first issue is the time complexity of the original FPS algorithm.
Essentially, FPS uses the same three steps depicted in lines 4, 5, and 6 in
\autoref{alg:suffpcr}. 
\begin{enumerate}
\item[$4'$.] $\mathbf{A} \leftarrow \mbox{Proj}_{\mathcal{F}^d}\left(\mathbf{B} - \mathbf{C} + \mathbf{S}/\lambda\right)$
\item[$5$.] $\mathbf{B} \leftarrow \mbox{Soft}(\mathbf{A} +
  \mathbf{C})$ where $\mbox{Soft}(b) = \mbox{sign}(b)\max\{|b|-1,\ 0\}$
\item[$6$.] $\mathbf{C} \leftarrow \mathbf{C} + \mathbf{A} - \mathbf{B}$.
\end{enumerate}
The only difference here between our implementation and that in FPS is
in step 4.
Each of these steps takes a matrix and produces another
matrix, where the matrices have $p^2$ elements. The second and third steps are
computationally simple 
(element-wise soft-thresholding and matrix addition). But the first,
$\mbox{Proj}_{\mathcal{F}^d}(\mathbf{Q})$,  
is more challenging. The solution requires computing the
eigendecomposition of $\mathbf{Q}$, an
$\mathcal{O}(p^3)$ operation, and then modifying the 
eigenvalues of $\mathbf{Q}$ through the solution of a
piecewise linear equation in $\tau$:
$
\mathbf{\Lambda}^2_{i,+}(\mathbf{Q})=
\min\{\max\{\mathbf{\Lambda}^2_{i}(\mathbf{Q})-\tau,\ 0\},\ 1\},
$
with $\tau$ such that  $\sum_{i=1}^{\min\{n,p\}} \mathbf{\Lambda}^2_{i,+}(\mathbf{Q})
= d$. The final result is then reconstructed as $\mathbf{A} = \mathbf{U}(\mathbf{Q})\mathbf{\Lambda}^2_{+}(\mathbf{Q})\mathbf{U}(\mathbf{Q})^\top$.
Because of the cubic complexity in $p$, the authors suggest the number of
features should not exceed one thousand. But typical transcriptomics data has many thousands of gene measurements, and preliminary selection of a subset is
suboptimal, as illustrated above.
Due to the form of the piecewise
solution, most eigenvalues will be set to 0. Thus, while we will
generally require more than $d$ eigenpairs, most are unnecessary,
certainly fewer than $\min\{n,p\}$.
Our implementation computes only a handful of eigenvectors corresponding to the
largest eigenvalues, rather than all $p$. If we compute enough to ensure that
some $\mathbf{\Lambda}^2_{i,+}(\mathbf{Q})$ will be 0, then
the rest are as well. Our implementation uses Augmented Implicitly
Restarted Lanczos Bidiagonalization~\citep[AIRLB,][]{BaglamaReichel2005} as
implemented in the \texttt{irlba} package~\citep{irlba-package}, though
alternative techniques such as those in
\citet{HomrighausenMcDonald2016,Gittens:2013aa} may
work as well. We provide a more detailed discussion
in the Supplement.

For moderate problems ($n,p \approx 100$), the
truncated eigendecomposition with AIRLB rather than
the full eigendecomposition leads to a three-fold speedup while the further
incorporation of specialized initializations leads to an eight-fold improvement
without any discernable loss of accuracy (results on a 2018 MacBook
Pro with 2.7 GHz Quad-Core processor and 16GB of memory running maxOS 10.15).
The results are similar when 
$p=5000$, though the same experiment on a high-performance Intel Xeon
E5-2680 v3 CPU with 12 cores, 256GB of memory, and optimized BLAS were
somewhat less dramatic (improvements of three-fold and four-fold
respectively). For large RNA-Seq datasets ($p\approx 20000$), we
observed a nearly ten-fold improvement in computation time.

The second issue is that the Fantope constraint in Equation
\eqref{eq:fps} ensures only that $\mbox{tr}(\mathbf{V}\mathbf{V}^\top)
= d$ but not that the number of rows with non-zero $l_2$-norm is
small. This feature of the convex relaxation results in
many rows with small, but non-zero, row-norm resulting in dense
estimates of $\bstar$. 
Thus, to make the final estimator $\hat{\mathbf{V}}_d$ sparse,
we hard-threshold rows in $\hat{\mathbf{V}}_d$ whose
$\ell_2$ norm is small, as illustrated in line 9, 10, and 11 in \autoref{alg:suffpcr}.
From empirical experience, we have found that there is often a strong elbow-type
behavior in the row-wise $\ell_2$ norm of $\hat{\mathbf{V}}_d$, similar to the
Skree plot used to choose $d$ in standard PCA.
Therefore, we develop a simple procedure, \autoref{alg:find_t}, to find the
best threshold automatically. Essentially, it calculates the empirical derivative of the
observation-weighted variances on each side of a potential threshold and  
maximizes their difference, resulting in signal and noise groups. We set the
rows in $\widehat{\mathbf{V}}_d$ corresponding 
to the noise  to 0. 
\methodname is also amenable for solving other generalized linear models.
For example, replacing line 12 in \autoref{alg:suffpcr} with logistic regression
solves classification problems.

\begin{algorithm}[tb!]
  \color{Black}
  \begin{algorithmic}[1]
    \STATE {\bfseries Input:} a $p$-vector $l$
    \FOR{$i \in 1, \cdots, p$}
    \STATE $T_n[i] = \texttt{var}(l[1:i])$
    \STATE $T_s[i] = \texttt{var}(l[(i+1):p])$ 
    \STATE $T[i] = i * T_n[i] + (p-i) T_s[i]$ 
    \STATE $\delta[i] = T[i] - T[i-1]$ \COMMENT{empirical derivative of $T$}
    \ENDFOR
    \STATE Set $i^*=\argmin_i  \{\delta[i] - \delta[i-1] >
    \texttt{mean}(|\delta[1:(i-1)]|)\}$ 
    \STATE {\bf Return:} $t=l[i^*]$
  \end{algorithmic}
  \caption{Find a $t$ to hard-threshold $l$}
  \label{alg:find_t}
\end{algorithm}

\section{Empirical evaluations}
\label{sec:empirical}

In this section, we show how \methodname performs on synthetic data
and on real public genomics datasets relative to state-of-the-art methods. 
\autoref{sec:sim} first presents a generative model for synthetic data and motivates the assumptions required for our theoretical results in \autoref{sec:theory}.
We include here one synthetic experiment under conditions favorable to \methodname
relative to \superPC.
We also 
investigate conditions favorable to \superPC, the influence of tuning parameter
selection, and the effect of the signal to
noise ratio but defer these to the Supplement.
\autoref{sec:semi} uses the NSCLC data as the $\mathbf{X}$ matrix, but creates
the response from a linear model.
\autoref{sec:real} reports the performance of \methodname on 5 public genomics
datasets.  
The supplement includes similar results for binary survival-status
outcomes. Across most settings in both synthetic and real data,
\methodname outperforms all competitors in prediction mean-squared error and is
able to select the true genes (those with $\bstar\neq 0$) more accurately.
\rev{An R package implementing \methodname and raw data are freely available at
\url{https://github.com/dajmcdon/suffpcr}. Package documentation may be viewed
at \url{https://dajmcdon.github.io/suffpcr}.}

\begin{figure*}[tb]
    \centering
    \includegraphics[width=0.95\linewidth]{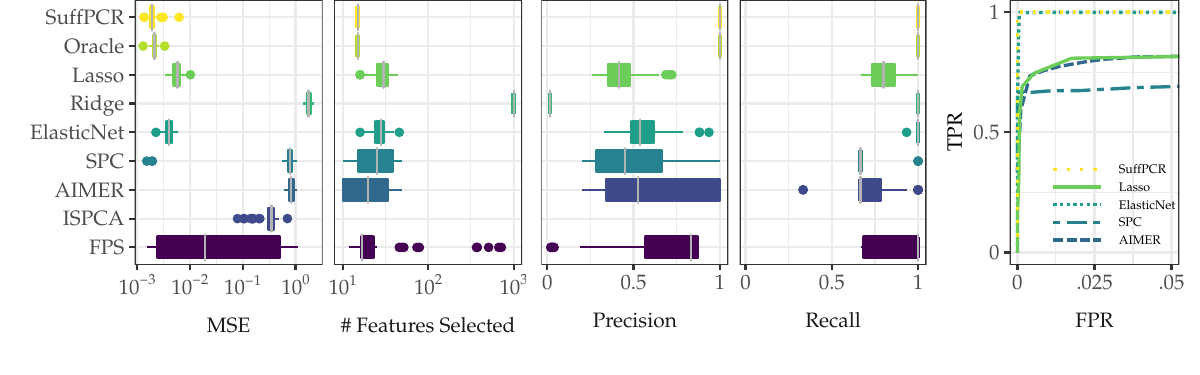}
    \caption{This figure compares the performance of \methodname against
      alternatives when the features come from a row-sparse factor model under
      favorable conditions for \methodname. Boxplots and receiver operating
      characteristic (ROC) curve (far right
      figure) are over 50 replications.  We have
      omitted the other methods from the ROC curve for legibility, but their
      behavior is similar to lasso. \rev{TPR and FPR stand for True/False Positive
      Rate respectively. Note that (as one would expect from the simulation
      conditions), SPC has the worst performance in terms of the ROC curve
      while both \methodname and Elastic net have AUC of almost 1.}
    } 
    \label{fig:reg1}
\end{figure*}



\begin{figure*}[tb]
    \centering
    \includegraphics[width=0.95\linewidth]{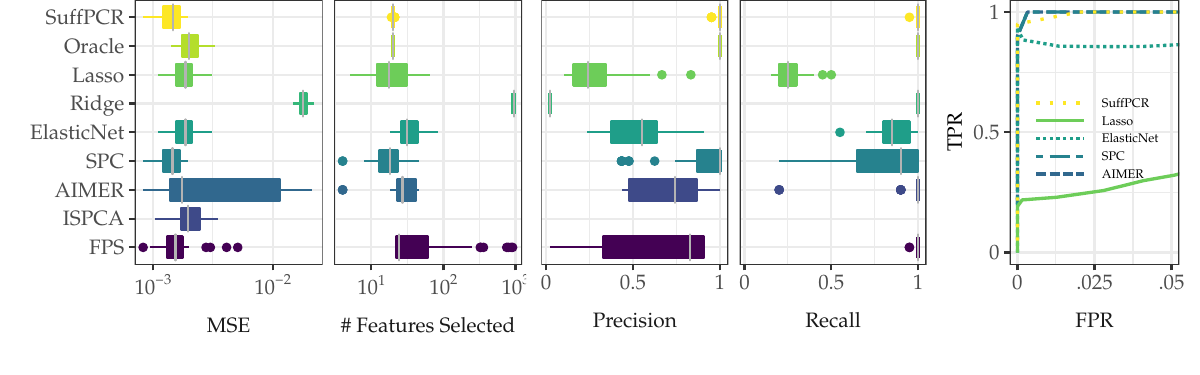}
    \caption{This figure compares the performance of \methodname against
      alternatives when the features come from a row-sparse factor model
      extracted from the NSCLC data. Boxplots and ROC curve (far right figure)
      are over 50 replications. \rev{In terms of the ROC curve, SPC and AIMER
        has the best 
      performance, though \methodname is not far behind. But note that SPC has
      much worse precision and recall.} } 
    \label{fig:semi1}
\end{figure*}

\subsection{Synthetic data experiments}
\label{sec:sim}

We generate data from the multivatiate Gaussian linear model
$
    y_i = x_i^\top \bstar + \epsilon_i,
$ 
where $x_i \sim \mbox{N}_p (0, \mathbf{\Sigma} )$,
$\bstar$ is the $p$-dimensional regression coefficient, $\epsilon_i \sim \mbox{N}(0,\sigma_y^2)$. 
We impose an orthogonal factor model for the covariates 
%
\rev{$    x_i = u^\top_i \mathbf{\Lambda}_d \mathbf{V}_d^\top + e_i,  $
where  $u_i$ are generated from $ \mbox{N}_{d}(0,
\mathbf{I}_d)$ independently, $\mathbf{\Lambda}_d$ is a diagonal
matrix with entries $(\lambda_1, \ldots, \lambda_d)$ in descending
order, and $\mathbf{V}_d\in\R^{p\times d}$ with $\mathbf{V}_d^\top
  \mathbf{V}_d = \mathbf{I}_d$.
The vector $e_i\in\R^n$ has i.i.d.\ $\mbox{N}(0, \sigma_x^2)$
entries independent of $u_i$, and $\sigma_x > 0$.}
We assume $\mathbf{V}_d$ is row sparse with only $s$ rows containing
non-zero entries. These non-zero rows are the ``true'' features to be
discovered, and they correspond to $\bstar \neq 0$.

\rev{It is important to note that, under this model, the rows of $\X$ follow a
  multivariate Gaussian 
  distribution independently, with mean 0 and full-rank covariance $\mathbf{\Sigma}
  = \mathbf{V}  \mathbf{L} \mathbf{V}^\top $ whenever $\sigma^2_x > 0$. 
  Here
  the columns of $\mathbf{V}$ are orthonormal
  eigenvectors on $\mathbb{R}^p$ and the
  eigenvalues are $l_1 \ge \cdots \ge l_p \ge 0$.
  Straightforward calculation shows that the first $d$ columns in $\mathbf{V}$ 
  are the same as the right singular vectors $\mathbf{V}_d$ in the signal
  component of $\mathbf{X}$.
  Furthermore, $l_i =\lambda_i^2 \mathbf{1}(i\leq d) + \sigma^2_x$, $i=1,\ldots,p$.}

We generate $y\in\R^n$ as a linear function of the latent factors
$\mathbf{U}_d$ with additive Gaussian noise:  
$
    y = \mathbf{U}_d   \Theta + z,
$
where $\Theta$ is the regression coefficient, and $z_i$ are i.i.d.\ $\mbox{N}(0,
\sigma^2_y)$, $i=1,\ldots,n$, independent of $\X$. 
%
%
Under this model
the population marginal correlation between each feature in $\X$ and $y$ is
$
  \Phi 
        = 
        \mathbf{V}_d \mathbf{\Lambda}_d \Theta,
$
and the population ordinary least squares coefficient of 
regressing $y$ on $\X$ is 
$
    \bstar = \mathbf{V}_d \mathbf{L}_d^{-1} \mathbf{\Lambda}_d \Theta.  
$
Note that the number of non-zero $\beta^*$ is $s$,
because $\mathbf{V}_d$ has only $s$ rows with non-zero entries.

In all cases, we use $n = 100$ observations and $p = 1000$ features, generating
three equal-sized sets for training, validation, and testing. We use
prediction accuracy on the validation set
to select \rev{tuning parameters for all methods. For the case of \methodname,
this means only $\lambda$, because we choose $t$ with \autoref{alg:find_t} and
set $d=3$. 
We use the test set for evaluating out-of-sample
performance.}
Each simulation is repeated 50 times. Results with $n=200$ and
$p=5000$ were similar. \rev{\autoref{alg:synthetic-data} makes this entire
  procedure more explicit.}


We compare \methodname with a number of alternative methods. The Oracle
estimator uses ordinary least squares (OLS) on the true features and
serves as a natural baseline: it uses information unavailable to the
analyst (the true genes) but represents the best method were that
information available. We also present results for Lasso \citep{Tibshirani1996}, Ridge \citep{hoerl1970ridge},
Elastic Net \citep{zou2005regularization}, \superPC
\citep{bair2006prediction}, AIMER \citep{ding2017predicting}, ISPCA \citep{piironen2018iterative}, and PCR using FPS directly without feature screening (using \autoref{alg:suffpcr} without step 9, 10 and 11). 
For ISPCA, we use the \texttt{dimreduce} R package to estimate the
principal components before performing regression.
\rev{For all competitors, we choose any tuning parameters that do not have
  default values using the validation set. Examples are $\lambda$ in Lasso,
  Ridge, and Elastic Net or the initial thresholding step in \superPC. We use
  the correct embedding dimension ($d=3$) whenever this is meaningful.} Additional
experiments are given in the Supplement. \rev{There, we investigate conditions
  favourable to \superPC, the choice of $d$, and the impact of different SNR choices.}

\begin{algorithm}[tb!]
  \color{Black}
  \begin{algorithmic}[1]
    \STATE {\bfseries Input:} $n=100$, $p=1000$, $r=5$, $d=3$,
    $\mathrm{SNR}_x=\mathrm{SNR}_y=5$.
    \STATE Generate i.i.d.\ $\textrm{N}(0,1)$ $\mathbf{U}\in\R^{n\times d}$,
    $\mathbf{E}\in\R^{n\times p}$, $z\in\R^n$.
    \STATE Set $\mathbf{\Lambda}_d = \textrm{diag}((d,\ d-1,\ldots,1)) \in \R^{d\times
      d}$.
    \STATE Generate i.i.d.\ $\textrm{N}(0,1)$ $\widetilde{\mathbf{V}}\in\R^{d\times
      d}$ and orthogonalize the columns.
    \STATE Extend $\widetilde{\mathbf{V}}\in\R^{s\times d}$ by repeating each
    row $r$ times ($s=rd$).
    \STATE Set $\mathbf{V}^\top_d = [\widetilde{\mathbf{V}}^\top\; \mathbf{0}] \in
    \R^{d \times p}$.
    \STATE Generate i.i.d.\ $\textrm{N}(0,1)$ $\tilde\Theta\in\R^{d-1}$.
    \STATE Set $\Theta_d = -
    \left(\sum_{i=1}^{d-1}\widetilde{\mathbf{V}}^\top_{ri}
      \mathbf{\Lambda}_{ii}\tilde{\Theta}_i\right) /
    (\widetilde{\mathbf{V}}_{rd} \mathbf{\Lambda}_{dd})$.
    \STATE Set $\Theta = [\tilde{\Theta}^\top\; \Theta_d]^\top$.
    \STATE Set $\beta^* = \mathbf{V}_d \mathbf{L}_d^{-1} \mathbf{\Lambda}_d \Theta$.
    \STATE Set $\sigma^2_x = \textrm{tr}(\mathbf{\Lambda}_d^2) /
    (p\mathrm{SNR}^2_x)$
    \STATE Set $\sigma^2_y =
    \left(\beta^{*\top}\mathbf{V}_d^\top\mathbf{\Lambda}_d^2 
      \mathbf{V}_d\bstar + \sigma^2_x\snorm{\bstar}_2^2\right) / (n\mathrm{SNR}^2_y)$.
    \STATE Set $\X =  \mathbf{U}_d \mathbf{\Lambda}_d \mathbf{V}_d^\top +
    \sigma_x\mathbf{E}$ and $y = \mathbf{U}_d   \Theta + \sigma_y z$
  \end{algorithmic}
  \caption{Generate synthetic data}
  \label{alg:synthetic-data}
\end{algorithm}

\subsubsection{Conditions favorable to \methodname}
\label{sec:sim1}
The first setting is designed to show the advantages of \methodname
relative to alternative methods, especially \superPC. We note that
other methods that employ screening by the marginal
correlation~\citep{ding2017predicting,piironen2018iterative} 
will have similar deficiencies. 
Because \superPC works well if Equation \eqref{eq:assumption} holds,
we design $\boldsymbol{\Sigma}$ to violate this condition and
set the first 15 features to have non-zero $\bstar$ but allow only the first 10
features to have non-zero correlation with the phenotype. \rev{This behaviour is
achieved with line 8 of \autoref{alg:synthetic-data}. By solving this equation
in one unknown component of $\Theta$, we force $\Phi=0$ for the third group of 5
components. Thus, as described in \autoref{sec:background},
Equation~\eqref{eq:assumption} will not hold: some $\Phi_j=0$ but $\beta^*_j\neq
0$.}
We set the true dimension of the subspace as $d = 3$, and we use the
correct dimension for methods based on principal components. 

\autoref{fig:reg1} shows the performance of \methodname and
state-of-the-art alternatives. 
In addition to reporting each method's prediction MSE
on the test set, we also show the number of features selected, precision,
recall, and the ROC curve. 
The ISPCA implementation does not select features.
In this example, \methodname actually outperforms the oracle estimator,
attaining smaller MSE while generally selecting the correct features. This
seemingly implausible result is likely because the variance
of estimating OLS on 15 features is large relative to that of
estimating the low-dimensional manifold followed by 3 regression
coefficients. \methodname has a clear advantage over all the alternative
methods, especially \superPC which is 3 orders of magnitude
worse. \superPC works so poorly because it ignores 5 features. 
ISPCA has slightly lower MSE than SPC.
Ridge is the worst, due to
fitting a dense model when a sparse model generated the data.
\methodname reduces MSE significantly relative to simply using FPS due to more accurate feature selection.
The right plot in \autoref{fig:reg1} further shows the ROC curve for
\methodname, Lasso, Elastic Net, \superPC and AIMER in which we can easily vary
the tuning parameter and select various numbers of
features. \methodname and AIMER have a perfect ROC curve, while the
other 3 methods are unable to identify 5 features.
We undertake a similar exercise under conditions favorable to \superPC in the
Supplement.

\subsection{Semi-synthetic analysis with real genomics data}
\label{sec:semi}

The simulations in \autoref{sec:sim} explore various scenarios for the
data generation process and show the performance of \methodname
relative to the alternatives, however, they do not use any real
genomic data. In this section, rather than fully generating
$\mathbf{X}$, we create a semi-synthetic analysis wherein only the phenotypes are generated.
We first performed PCA on the NSCLC data
\citep{lazar2013integrated} and note that the first two empirical
eigenvalues are relatively large, so we chose the number of PCs to be $d=2$. 
We keep the top 20 rows in the empirical $\mathbf{V}$ which have the
largest norm and set the rest to 0. We then recombine and add
noise. The phenotype is constructed as in the previous simulations, and
the SNR is calibrated as in \autoref{sec:sim1}. 
\autoref{fig:semi1} shows the results analogous to those in \autoref{fig:reg1}. 
\methodname continues to perform well relative to alternatives, though
here, FPS has similar MSE, albeit poor feature selection.

\subsection{Analysis of real genomics data}
\label{sec:real}

We analyze 5 microarray datasets that are publicly available and widely used as
benchmarks. Four of the datasets present messenger RNA (mRNA) abundance
measurements from patients with breast cancer \citep{van2002gene,
  miller2005expression}, diffuse large B-cell lymphoma (DLBCL)
\citep{rosenwald2002use}, and acute myeloid leukemia (AML)
\citep{bullinger2004gene}, and the fifth reports microRNA (miRNA) levels from
non-small cell lung cancer (NSCLC) patients \citep{lazar2013integrated}. The
features in $\X$ are gene expression measurements from microarrays.  \rev{In the
Supplement, we apply \methodname to predict COVID-19 viral load from RNA-Seq
data. }

The phenotypes $Y$ are censored survival time in all cases, though some
of the datasets also contain binary survival status indicators. 
Because the real valued phenotype is non-negative and right censored, we follow common practice and transform $Y$ to $\log(Y + 1)$. 
Each observation is a unique patient. 
The first breast cancer dataset has 78 observations and 4751 genes, the second
has 253 observations and 11331 genes, DLBCL has 240 observations and 7399 genes,
AML has 116 observations and 6283 genes, and NSCLC has 123 observations and 939
genes.


We randomly split each dataset into 3 folds for training, validation, and
testing with proportions $40\%$, $30\%$ and $30\%$ respectively.  
We set the number of components $d=3$ and search over 5 log-linearly spaced
$\lambda$ values. Other choices for $d$ and $\lambda$ yield similar results. 
We train all methods on the training set, use the validation set to
choose any necessary tuning parameters, and report performance of each
method on the test set. We repeat the entire process (data splitting,
validation, and testing) 10 times to reduce any bias induced by the
random splits. In all cases, all methods were tuned to optimize
validation-set MSE.

\autoref{tab:reg} shows the average prediction MSE and the average
number of selected features for \methodname and any  alternative methods that perform feature selection. 
\methodname works better than all the alternative methods on 4 out of
5 datasets with a comparatively small number of features selected. The
DLBCL data is difficult for 
both sparse and PC-based methods.
\rev{As described in \autoref{sec:method}, FPS cannot be used for these data sets
because of the number of genes.}
Non-sparse alternatives have much smaller MSE, suggesting
that many genes may play a roll in mortality rather than only a subset.
SPCA is designed to maximize the variance explained by the principal
components subject to a penalty on the non-sparsity, and it does not
seem to work well in regression tasks.
DSPCA has relatively low prediction MSE, and it does in principle perform feature selection, though it generally produces a dense model.
While Ridge, Random Forests and SVM predict well in general, they do not perform
any feature selection, which is a key objective here, so  
show their MSE in the Supplement.

To assess the potential relevance of the genes selected by \methodname to the
cancer type from which they 
were identified, we further explored the DLBCL data and extracted
the selected genes. (We do the same with AML in the Supplement.)
We first find the best $\lambda$ via 5-fold cross-validation on all the data and
then train \methodname with this $\lambda$.  
Our model selects 87 features corresponding to 32 unique genes and 2
expressed sequence tags (ESTs) for DLBCL.
%
%
Seventeen of the identified genes encode ribosomal proteins,
overexpression of which is associated with poor prognosis
\citep{rpgdlbcl}. A further 9 genes encoding major histocompatibility
complex class II (MHCII) proteins were detected, a notable finding in
light of the fact that MHCII downregulation is a means by which some
DLBCLs evade the immune system \citep{dlbclmhcii}. Discovering these large groups
of similarly functioning genes illustrates the benefits of \methodname
relative to alternatives. \textit{CORO1A}
encodes the actin-binding tumor suppressor p57/coronin-1a, the
promoter of which is often hypermethylated, and therefore likely
silenced in DLBCL \citep{dlbclp57}. \textit{FEZ1} expression has been
used in a prognostic model \citep{dlbclfez1}. \textit{RAG1},
encoding a protein involved in generating antibody diversity, can
induce specific genetic aberrations found in DLBCL
\citep{dlbclnrco}. \textit{RYK} encodes a catalytically dead receptor
tyrosine kinase involved in Wnt signaling and \textit{CXCL5} encodes a
chemokine. To our knowledge, neither gene has been implicated in DLBCL
and thus may be of interest for further exploration. EST Hs.22635
(GenBank accession AA262469) corresponds to a portion of
\textit{ZBTB44}, which encodes an uncharacterized transcriptional
repressor, while EST Hs.343870 (GenBank accession AA804270) does not
appear to be contained within an annotated gene. 
The Supplement lists the selected genes and associated
references. A separate listing of the genes encoding ribosomal and MHCII
proteins are given in the Supplement.

\begin{table*}[tb!]
  \centering
  \resizebox{\textwidth}{!}{
    \begin{tabular}{@{} l r r c r r c r r c r r c r r @{}}
    \toprule
    & \multicolumn{2}{c}{Breast Cancer1}  &\phantom{}&        \multicolumn{2}{c}{Breast Cancer2}  &\phantom{}& \multicolumn{2}{c}{DLBCL} &\phantom{}&  \multicolumn{2}{c}{AML}  &\phantom{}&   \multicolumn{2}{c}{NSCLC}  \\
    \cline{2-3} \cline{5-6} \cline{8-9} \cline{11-12} \cline{14-15}
    Method  &  MSE  & feature\# &&  MSE  & feature\#   &&  MSE  & feature\#  && MSE  & feature\#  && MSE  & feature\#  \\
    \midrule
    \methodname  & \textbf{0.5980}   & 80  && \textbf{0.4168}  & 121  && 0.7073  & 48    && \textbf{1.9568}   & 75   && \textbf{0.1970}  & 27 \\
    Lasso   & 0.7141   & 7  && 0.4622  & 39 && 0.6992  & 31   && 2.0998   & 3  && 0.2263  & 4 \\
    ElasticNet  & 0.6845  & 41 && 0.4517 & 104 && \textbf{0.6869}  & 87 && 2.0820  & 5   && 0.2332 & 20 \\
    \superPC  & 0.6188  & 59 && 0.4179 & 823 && 0.7677 & 67 && 2.3237 & 62   && 0.2795 & 62 \\
    ISPCA  & 0.8647 & NA  && 0.5882 & NA && 0.9441  & NA && 2.3109 & NA && 0.2408 &  NA \\
    AIMER  & 0.6629  & 76 && 0.4192 & 795 && 0.7003  & 76 && 1.9737  & 36 && 0.2120 &  50 \\
    SPCA  & 17.0965 & 212 && 4.7239 & 38 && 2.5980 & 652 && 31.11 & 1043 && 0.9757 & 387  \\
    DSPCA  & 0.6132 & 4374 && 0.4557 & 7880 && 0.7249 & 1342 && 1.9781 & 2742 && 0.2041 & 305  \\
    \bottomrule
    \end{tabular}
    }
    \caption{Prediction MSE and number of selected features for regression of survival time on gene expression measurements}
    \label{tab:reg}
\end{table*}

\section{Theoretical guarantees}
\label{sec:theory}

When the sparse factor model described in \autoref{sec:empirical} is true,
\methodname enjoys near-optimal convergence rates. We now make the necessary
assumptions concrete and note that some can be weakened.

\begin{itemize}
    \item[A1.] $Y_i=X_i^\top \bstar + \epsilon_i$, $i=1,\ldots,n$, where
      $\epsilon_i\sim \mbox{N}(0,\sigma_y)$, $\sigma_y>0$. 
    \item[A2.] $X_i \sim \mbox{N}_p(0,\boldsymbol{\Sigma})$, $i=1,\ldots,n$.
    \item[A3.] $\boldsymbol{\Sigma}=\mathbf{V L V}^\top$, is symmetric,
      $\mathbf{V}^\top \mathbf{V}=\mathbf{I}_p$, $\mathbf{L}$ is diagonal. 
    \item[A4.] $l_i = \lambda_i^2 \mathbf{1}(i\leq d) + \sigma_x^2$ and
      $\lambda_1-\lambda_d := \phi > 0$.
    \item[A5.] $\snorm{\mbox{diag}\left(\mathbf{V}_d\mathbf{V}_d^\top\right)}_0
      \leq s$ and  
      $\min_j \left\{(\mathbf{V}_d\mathbf{V}_d^\top)_{jj} \vee 0\right\} >
      2\tau$. 
    \item[A6.] \rev{as $n,p\rightarrow\infty$,  $n > (s^2+ d)\log(p)$ eventually.}
    
\end{itemize}
\rev{Assumptions A1--A4 are the same as those used in \autoref{sec:sim} to
  generate data from a linear
  factor model. Assumption A5 says that the number of true nonzero coefficients
  $\bstar$ must be no more than $s$ and that the size of the associated components
  must be large enough. Assumption A6 means that eventually, we must have at
  least as many 
  observations $n$ as a logarithmic function of $p$ times the true number of
  components plus the square of the number of nonzero $\bstar$ coefficients.}
\begin{theorem}
  \label{thm:thm1}
Suppose assumptions A1 to A6 hold and let $\hat{\beta}$ be the
estimate produced by \textnormal{\methodname} with $\lambda = c
\lambda_1\sqrt{\log(p)/n}$ and $t < 2\tau$ \rev{where $t$ is the threshold
  used in \autoref{alg:suffpcr} and $\tau$ is given in A5}. Then
\[
\frac{1}{n}\snorm{\mathbf{X}(\hat\beta - \bstar)}^2_2 = \mathcal{O}_P
\left(\frac{(s^2+d)\log(p)}{n}\right). 
\]
\end{theorem}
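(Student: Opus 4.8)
The plan is to split the error of \methodname into three pieces that mirror the algorithm: (i) the accuracy of the Fantope estimate of the leading $d$-dimensional eigenspace of $\boldsymbol{\Sigma}$, produced by the ADMM iterations (lines~4--6) and decomposed in line~8; (ii) the correctness of the hard-thresholding step (lines~9--11), which must recover the true active set without discarding signal; and (iii) the error of the ordinary least squares fit on the recovered subspace (line~12). The structural fact that makes the analysis clean is that $\hat{\mathbf{V}}_d$ is a function of $\X$ alone and never sees $Y$, so conditional on $\X$ the projection onto $\mbox{colspace}(\X\hat{\mathbf{V}}_d)$ is independent of the regression noise $\boldsymbol{\epsilon}$. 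I will also use that in the factor model (A1--A4) the coefficient $\bstar = \boldsymbol{\Sigma}^{-1}\Phi$ lies in the span of the leading eigenvectors, so $\bstar = \mathbf{V}_d\gamma^*$ for some $\gamma^*\in\R^d$ whose support is contained in the $s$ nonzero rows of $\mathbf{V}_d$ (A6). Throughout I treat the Lanczos approximation $\widetilde{\mbox{Proj}}_{\mathcal{F}^d}$ as exact, since it reproduces the eigenpairs with nonzero truncated eigenvalue.

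First I would control the Fantope iterate $\mathbf{B}$. Writing $\boldsymbol{\Pi} = \mathbf{V}_d\mathbf{V}_d^\top$ for the true projection, the guarantee of \citet{vu2013fantope} bounds $\norm{\mathbf{B}-\boldsymbol{\Pi}}_F$ by $C s\lambda/\delta$ on the event $\{\lambda \ge \norm{\mathbf{S}-\boldsymbol{\Sigma}}_\infty\}$, where $\delta = \lambda_d^2$ is the eigengap induced by A4 and the effective sparsity of $\boldsymbol{\Pi}$ is its $s^2$ nonzero entries. For Gaussian $x_i$ one has $\norm{\mathbf{S}-\boldsymbol{\Sigma}}_\infty = \mathcal{O}_P\big(\lambda_1\sqrt{\log(p)/n}\big)$, so choosing $c$ large enough makes $\lambda = c\lambda_1\sqrt{\log(p)/n}$ dominate it with high probability, giving $\norm{\mathbf{B}-\boldsymbol{\Pi}}_F = \mathcal{O}_P\big(s\sqrt{\log(p)/n}\big)$ once the eigenvalue ratios are held constant. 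A Davis--Kahan argument transfers this to the rank-$d$ projection $\hat{\boldsymbol{\Pi}}_0 = \mathbf{V}_d\mathbf{V}_d^\top$ extracted in line~8, because the $d$-th and $(d+1)$-st eigenvalues of $\mathbf{B}$ separate near $1$ and $0$; hence $\norm{\hat{\boldsymbol{\Pi}}_0 - \boldsymbol{\Pi}}_F = \mathcal{O}_P\big(s\sqrt{\log(p)/n}\big)$ as well.

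The delicate step, which I expect to be the main obstacle, is the thresholding in lines~9--11 and its interface with the Frobenius bound. Since $\norm{\hat{\boldsymbol{\Pi}}_0 - \boldsymbol{\Pi}}_F$ is small, the diagonal entries $(\hat{\boldsymbol{\Pi}}_0)_{jj}$ are uniformly close to $(\boldsymbol{\Pi})_{jj}$; on the active set these exceed $2\tau$ by A6, while off the active set they are exactly $0$. Because the estimation error is $o_P(1)$ under A5 and the threshold satisfies $t < 2\tau$, thresholding the row norms retains precisely the $s$ true rows and zeroes the rest with high probability, so $\hat{\mathbf{V}}_d$ is supported on the true set $\mathcal{A}$. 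The care here is twofold: I must verify that the threshold returned by \autoref{alg:find_t} lands in the admissible window $t<2\tau$ (which is assumed), and that zeroing these rows---each carrying negligible mass in $\boldsymbol{\Pi}$---perturbs the column space by no more than the mass removed, so that the projection $\hat{\boldsymbol{\Pi}}$ onto $\mbox{colspace}(\hat{\mathbf{V}}_d)$ still obeys $\norm{\hat{\boldsymbol{\Pi}}-\boldsymbol{\Pi}}_F = \mathcal{O}_P\big(s\sqrt{\log(p)/n}\big)$. Coupling the Frobenius rate to the entrywise diagonal control needed for exact support recovery is the principal technical friction.

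Finally I would bound the in-sample prediction error. Let $\mathbf{P}$ be the projection onto $\mbox{colspace}(\X\hat{\mathbf{V}}_d)$, so $\X\hat\beta = \mathbf{P}Y$, and substitute $Y = \X\mathbf{V}_d\gamma^* + \sigma_y\boldsymbol{\epsilon}$ to obtain the orthogonal decomposition
\[
\X(\hat\beta-\bstar) = -(\mathbf{I}-\mathbf{P})\X\mathbf{V}_d\gamma^* + \sigma_y\mathbf{P}\boldsymbol{\epsilon},
\]
whose summands lie in complementary subspaces, so $\norm{\X(\hat\beta-\bstar)}_2^2$ splits cleanly into bias and variance. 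For the bias, since $(\mathbf{I}-\mathbf{P})\X\hat{\mathbf{V}}_d = \mathbf{0}$ and $\mathbf{V}_d\gamma^*$ lies in the range of $\boldsymbol{\Pi}$, I replace $\mathbf{V}_d\gamma^*$ by its best approximation in $\mbox{colspace}(\hat{\mathbf{V}}_d)$ to get $\tfrac1n\norm{(\mathbf{I}-\mathbf{P})\X\mathbf{V}_d\gamma^*}_2^2 \le \norm{\mathbf{S}_{\mathcal{A}\mathcal{A}}}_{\mathrm{op}}\,\norm{(\boldsymbol{\Pi}-\hat{\boldsymbol{\Pi}})\mathbf{V}_d\gamma^*}_2^2 \le \mathcal{O}_P(1)\norm{\hat{\boldsymbol{\Pi}}-\boldsymbol{\Pi}}_F^2\norm{\gamma^*}_2^2 = \mathcal{O}_P\big(s^2\log(p)/n\big)$, where the restricted operator norm of the active $s\times s$ block of $\mathbf{S}$ is $\mathcal{O}_P(1)$ because $n = \Omega(s^2\log p)$ controls it uniformly over supports. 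For the variance, conditioning on $\X$ renders $\mathbf{P}$ a fixed rank-$d$ projection independent of $\boldsymbol{\epsilon}$, so $\norm{\mathbf{P}\boldsymbol{\epsilon}}_2^2 \sim \chi^2_d$ and $\tfrac1n\sigma_y^2\norm{\mathbf{P}\boldsymbol{\epsilon}}_2^2 = \mathcal{O}_P(d/n)$. Adding the pieces gives $\tfrac1n\norm{\X(\hat\beta-\bstar)}_2^2 = \mathcal{O}_P\big((s^2+d)\log(p)/n\big)$ after the crude bound $d/n \le d\log(p)/n$, and taking square roots yields the stated rate.
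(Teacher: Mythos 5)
Your overall architecture (Fantope error via \citet{vu2013fantope}, a bias--variance split on the estimated subspace, and a conditional $\chi^2_d$ bound for the variance term) is reasonable, and the variance argument you give is, if anything, cleaner than the paper's appeal to random-design OLS theory. The genuine gap is in the bias term, at exactly the place you yourself flagged as ``the principal technical friction'': you assume that the thresholding step recovers the support exactly under the hypothesis $t<2\tau$. But $t<2\tau$ is only an \emph{upper} bound on the threshold; together with A5 and A6 it guarantees that no true rows are discarded, but it imposes no lower bound on $t$, so nothing forces the spurious rows to be killed. In particular $t=0$ (no thresholding at all) satisfies the hypothesis of \autoref{thm:thm1}, and the paper itself emphasizes in \autoref{sec:fps} that raw FPS solutions are entrywise sparse but typically \emph{not} row sparse, so $\mathrm{supp}(\hat{\mathbf{V}}_d)$ can be far larger than $\mathcal{A}$. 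Exact support recovery is precisely what the stronger two-sided window $\tau<t<2\tau$ of \autoref{thm:thm2} is for. Without it, your key inequality $\frac{1}{n}\norm{(\mathbf{I}-\mathbf{P})\X\mathbf{V}_d\gamma^*}_2^2 \le \norm{\mathbf{S}_{\mathcal{A}\mathcal{A}}}_{\mathrm{op}}\norm{(\boldsymbol{\Pi}-\hat{\boldsymbol{\Pi}})\mathbf{V}_d\gamma^*}_2^2$ breaks down: the comparison vector $\hat{\boldsymbol{\Pi}}\mathbf{V}_d\gamma^*$ is supported on $\mathrm{supp}(\hat{\mathbf{V}}_d)$, so the quadratic form must be restricted to $\mathcal{A}\cup\mathrm{supp}(\hat{\mathbf{V}}_d)$; if that set is large you are forced back to the global bound $\norm{\mathbf{S}}_{\mathrm{op}}=\mathcal{O}_P(\lambda_1^2(1+p/n))$, which inflates the bias by a factor of order $p/n$ and destroys the claimed rate in exactly the regime $p\gg n$ the theorem targets.

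The paper's proof is engineered to avoid any control of the estimated support. It introduces the intermediate estimator $\tilde\beta=\mathbf{V}_d\tilde\gamma$ (least squares on the \emph{true} subspace), bounds $\norm{\X(\bstar-\tilde\beta)}_2^2$ by random-design OLS theory at rate $\sigma^2 d/n$, and then treats $\norm{\X(\tilde\beta-\hat\beta)}_2^2=\norm{(\mathbf{R}-\mathbf{Q})Y}_2^2$ as the difference of two projections of the same response, bounded by $\norm{\mathbf{Q}-\mathbf{R}}_F^2\norm{Y}_2^2\le\norm{\boldsymbol{\Pi}-\hat{\boldsymbol{\Pi}}}_F^2\norm{Y}_2^2$ followed by the same Fantope corollary you invoke; no restricted eigenvalue and no support recovery enters anywhere. (The paper's projector-comparison inequality $\norm{\mathbf{Q}-\mathbf{R}}_F\le\norm{\boldsymbol{\Pi}-\hat{\boldsymbol{\Pi}}}_F$ is itself asserted rather than proved in detail, but it is the device that lets the hypothesis remain one-sided.) To repair your argument you would either have to strengthen the hypothesis to $\tau<t<2\tau$, proving a weaker statement than the one claimed, or replace your bias step with a support-free comparison of the paper's kind.
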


\begin{theorem}
    \label{thm:thm2}
Suppose assumptions A1 to A6 hold and  let $\hat{\beta}$ be the
estimate produced by \textnormal{\methodname} with $\lambda = c
\lambda_1\sqrt{\log(p)/n}$ and $2\tau > t> \tau$ \rev{where $t$ is the threshold
used in \autoref{alg:suffpcr} and $\tau$ is given in A5}. Then
\[
\left|\mathrm{supp}\left(\hat\beta\right) \bigtriangleup
  \mathrm{supp}\left(\bstar\right)\right| = \mathcal{O}_P\left(  
\frac{s^2 \log(p)}{n}
\right),
\]
where $A \bigtriangleup B= A/B\cup B/A$ is the symmetric difference operator and $\mathrm{supp}$ denotes the support set.
\end{theorem}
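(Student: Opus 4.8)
The plan is to reduce support recovery to the behavior of the row-norm thresholding step and then to count the selection errors against the Frobenius error of the estimated projector, which is already controlled in the proof of \autoref{thm:thm1}. Write $\Pi := \mathbf{V}_d\mathbf{V}_d^\top$ for the population projector and $\tilde{\Pi}$ for the rank-$d$ eigenprojector formed from the converged iterate $\mathbf{B}$ (line 8 of \autoref{alg:suffpcr}), so that the diagonal entry $l_j = \tilde\Pi_{jj}$ is exactly the quantity compared against the threshold $t$. By \autoref{prop:1} and the factor model, $\mathrm{supp}(\bstar) = \{j : \Pi_{jj} > 0\}$, which has cardinality $s$, with $\Pi_{jj} > 2\tau$ on the support (by A6) and $\Pi_{jj} = 0$ off it. On the estimation side $\hat\beta = \hat{\mathbf{V}}_d\hat\gamma$ has $j$-th coordinate $\hat v_j^\top\hat\gamma$; a thresholded row forces $\hat\beta_j = 0$, while a retained row has $\hat\beta_j \neq 0$ outside a null set, since conditional on $\mathbf{X}$ the estimator $\hat{\mathbf{V}}_d$ is fixed and the least-squares vector $\hat\gamma$ is affine in the Gaussian response $Y$, hence absolutely continuous. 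Thus, almost surely, $\mathrm{supp}(\hat\beta) = \{j : \tilde\Pi_{jj} \geq t\}$.

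Next I would translate each selection error into a large diagonal deviation. If $j \in \mathrm{supp}(\bstar)$ is dropped then $\tilde\Pi_{jj} < t$ while $\Pi_{jj} > 2\tau$, so $\Pi_{jj} - \tilde\Pi_{jj} > 2\tau - t$; if $j \notin \mathrm{supp}(\bstar)$ is retained then $\tilde\Pi_{jj} \geq t > \tau$ while $\Pi_{jj} = 0$. Because $2\tau > t > \tau$ forces $t > 2\tau - t$, both cases imply $|\tilde\Pi_{jj} - \Pi_{jj}| > g$ with $g := 2\tau - t > 0$ a constant depending only on $\tau$. Therefore $\mathrm{supp}(\hat\beta)\bigtriangleup\mathrm{supp}(\bstar) \subseteq \{j : |\tilde\Pi_{jj} - \Pi_{jj}| > g\}$, and a Markov-type counting bound gives $|\mathrm{supp}(\hat\beta)\bigtriangleup\mathrm{supp}(\bstar)| \leq g^{-2}\sum_j (\tilde\Pi_{jj} - \Pi_{jj})^2 \leq g^{-2}\norm{\tilde\Pi - \Pi}_F^2$.

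It remains to bound $\norm{\tilde\Pi - \Pi}_F^2$, which is the same subspace-estimation quantity controlled in \autoref{thm:thm1}. I would split it as $\norm{\tilde\Pi - \mathbf{B}}_F + \norm{\mathbf{B} - \Pi}_F$. The second term is the FPS statistical error: following \citet{vu2013fantope}, a curvature (restricted strong convexity) argument around the Fantope, together with entrywise concentration of $\mathbf{S} - \boldsymbol{\Sigma}$ on the order $\lambda_1\sqrt{\log(p)/n}$ and the choice $\lambda \asymp \lambda_1\sqrt{\log(p)/n}$, yields $\norm{\mathbf{B} - \Pi}_F = \mathcal{O}_P\big((s/\phi)\sqrt{\log(p)/n}\big)$, where the factor $s$ arises because $\Pi$ is supported on an $s\times s$ block (so the $\norm{\cdot}_{1,1}$ penalty term scales with $s$). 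The first term, the error of truncating $\mathbf{B}$ to its leading rank-$d$ eigenprojector, is controlled by a Davis--Kahan $\sin\Theta$ bound using the eigengap $\phi$ from A4 and is of the same order. Squaring and absorbing the constant $g^{-2}$ gives $|\mathrm{supp}(\hat\beta)\bigtriangleup\mathrm{supp}(\bstar)| = \mathcal{O}_P(s^2\log(p)/n)$, as claimed; assumption A5 guarantees this bound is $o(1)$ in probability.

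The main obstacle is the Frobenius bound on $\norm{\mathbf{B} - \Pi}_F$, and in particular obtaining the sharp linear-in-$s$ dependence inside the rate before squaring: this requires the restricted-strong-convexity constant of the Fantope optimization to be bounded below uniformly over the relevant cone, and careful bookkeeping that the dual penalty (the entrywise max) interacts with an $s\times s$ block support so that only a factor $s$, rather than $s^2$, is incurred pre-squaring. Everything else — the reduction to diagonals, the gap argument driven by A6, and the truncation via Davis--Kahan — is routine once that bound is in hand; the only additional care is ruling out the measure-zero cancellation $\hat v_j^\top\hat\gamma = 0$, dispatched by the continuity of $\hat\gamma$ noted above.
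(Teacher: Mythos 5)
Your proof is correct and takes essentially the same route as the paper: the paper disposes of this theorem as ``a direct consequence of Theorem 3.2 of \citet{vu2013fantope},'' and your argument---reducing $\mathrm{supp}(\hat\beta)$ to diagonal thresholding of the estimated projector (with the almost-sure non-cancellation of $\hat v_j^\top\hat\gamma$), converting each selection error into a diagonal deviation of size at least $g = 2\tau - t$, counting errors against $g^{-2}\norm{\tilde{\boldsymbol\Pi}-\boldsymbol\Pi}_F^2$, and invoking the FPS Frobenius rate $\mathcal{O}_P\left(s\sqrt{\log(p)/n}\right)$ (Corollary 3.3 of the same reference, which the paper also uses in its proof of Theorem 1)---is precisely the argument that citation encapsulates. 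The only caveat, shared equally by the paper, is the implicit assumption that $\mathrm{supp}(\bstar)$ equals the full row support of $\mathbf{V}_d$: Proposition 1 yields only the inclusion $\mathrm{supp}(\bstar) \subseteq \{j : \norm{v_j}_2 > 0\}$, and equality requires that no cancellation $v_j^\top \mathbf{L}_d^{-1}\boldsymbol{\Lambda}_d\Theta = 0$ occurs, which the paper asserts in its simulation setup ($|\mathcal{A}| = s$) but never states as an assumption.
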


\rev{In both results above, $c$ is a positive number (possibly different between
  the two) that is independent of $n$ and $p$, but may depend on any of the
  other values given in A1--A6.} \autoref{thm:thm1} gives a convergence rate for
the prediction error of 
\methodname comparable to that of Lasso 
though with explicit additional dependence on $d$. Under standard
assumptions with fixed design, this dependence would not
exist for Lasso. On the other hand, our results are for random design
with $d$ small, along with different constants absorbed by the
big-$\mathcal{O}$.
\autoref{thm:thm2} shows that
our procedure can correctly recover the set of nonzero $\beta^*$ as long as the
threshold $t$ is chosen correctly. We note that this result is a
direct consequence of \citet[Theorem 3.2]{vu2013fantope}. In
practice, the condition $2\tau>t>\tau$ cannot be verified, although
the ``elbow'' condition we employ in the empirical examples seems to
work well. Finally, we emphasize
that, as is standard in the literature,  these results are
for asymptotically optimal tuning parameters $\lambda,\ t$ rather
than empirically chosen values.
The proof of \autoref{thm:thm1} is given in the Supplement. These
results suggest that \methodname is nearly optimal as $p$ and
$n$ grow.

\section{Discussion}
\label{sec:con}

High-dimensional prediction methods, including regression and classification,
are widely used to gain biological insights from large datasets. Three main goals in this 
setting are accurate prediction, feature selection, and computational tractability.
We propose a new method called \methodname which is capable of achieving these goals
simultaneously. \methodname is a linear predictor on estimated sparse
principal components. Because of the sparsity of the projected 
subspace, \methodname usually selects a small number of features.
We conduct a series of synthetic, semi-synthetic and real data analyses to 
demonstrate the performance of \methodname and compare 
it with existing techniques. We also prove 
near-optimal convergence rates of \methodname under sparse assumptions.
\methodname works better than alternative methods when the 
true model only involves a subset of features.

\section*{Funding}

The authors gratefully acknowledge support National Science Foundation (grant
DMS–1753171 to DJM) the National Institutes of Health (grant R35GM128631 to GEZ)
and the National Sciences and Engineering Research Council of Canada (NSERC)
(grant RGPIN-2021-02618 to DJM).


\bibliographystyle{plainnat}
\bibliography{ref}

\begin{thebibliography}{51}
\providecommand{\natexlab}[1]{#1}
\providecommand{\url}[1]{\texttt{#1}}
\expandafter\ifx\csname urlstyle\endcsname\relax
  \providecommand{\doi}[1]{doi: #1}\else
  \providecommand{\doi}{doi: \begingroup \urlstyle{rm}\Url}\fi

\bibitem[Alter et~al.(2000)Alter, Brown, and Botstein]{Alter2000}
O~Alter, PO~Brown, and D~Botstein.
\newblock Singular value decomposition for genome-wide expression data
  processing and modeling.
\newblock \emph{PNAS}, 97:\penalty0 10101--10106, 2000.

\bibitem[Baglama and Reichel(2005)]{BaglamaReichel2005}
J~Baglama and L~Reichel.
\newblock Augmented implicitly restarted {L}anczos bidiagonalization methods.
\newblock \emph{SIAM J. Sci. Comput.}, 27:\penalty0 19--42, 2005.

\bibitem[Baglama et~al.(2019)Baglama, Reichel, and Lewis]{irlba-package}
J~Baglama, L~Reichel, and BW~Lewis.
\newblock \emph{irlba: Fast Truncated Singular Value Decomposition and
  Principal Components Analysis for Large Dense and Sparse Matrices}, 2019.
\newblock {R} package version 2.3.3.

\bibitem[Bair and Tibshirani(2004)]{bair2004semi}
E~Bair and R~Tibshirani.
\newblock Semi-supervised methods to predict patient survival from gene
  expression data.
\newblock \emph{PLoS Biol.}, 2:\penalty0 e108, 2004.

\bibitem[Bair et~al.(2006)Bair, Hastie, Paul, and
  Tibshirani]{bair2006prediction}
E~Bair, T~Hastie, D~Paul, and R~Tibshirani.
\newblock Prediction by supervised principal components.
\newblock \emph{J. Am. Stat. Assoc.}, 101:\penalty0 119--137, 2006.

\bibitem[Bullinger et~al.(2004)]{bullinger2004gene}
L~Bullinger et~al.
\newblock Gene expression profiling identifies new subclasses and improves
  outcome prediction in adult myeloid leukemia.
\newblock \emph{N. Engl. J. of Med.}, 350:\penalty0 1605--1616, 2004.

\bibitem[Cera et~al.(2019)]{cera2019genes}
I~Cera et~al.
\newblock Genes encoding {SATB2}-interacting proteins in adult cerebral cortex
  contribute to human cognitive ability.
\newblock \emph{PLoS Genetics}, 15:\penalty0 e1007890, 2019.

\bibitem[Chakraborty(2019)]{chakraborty2019}
S~Chakraborty.
\newblock Use of {Partial Least Squares} improves the efficacy of removing
  unwanted variability in differential expression analyses based on {RNA-S}eq
  data.
\newblock \emph{Genomics}, 111:\penalty0 893--898, 2019.

\bibitem[d'Aspremont et~al.(2005)]{d2005direct}
A~d'Aspremont et~al.
\newblock A direct formulation for sparse {PCA} using semidefinite programming.
\newblock In \emph{NeurIPS}, pages 41--48, 2005.

\bibitem[de~Charette and Houot(2018)]{dlbclmhcii}
M~de~Charette and R~Houot.
\newblock Hide or defend, the two strategies of lymphoma immune evasion:
  potential implications for immunotherapy.
\newblock \emph{Haematologica}, 103:\penalty0 1256--1268, 2018.

\bibitem[Ding and McDonald(2017)]{ding2017predicting}
L~Ding and DJ~McDonald.
\newblock Predicting phenotypes from microarrays using amplified, initially
  marginal, eigenvector regression.
\newblock \emph{Bioinformatics}, 33:\penalty0 i350--i358, 2017.

\bibitem[Eckstein and Bertsekas(1992)]{EcksteinBertsekas1992}
J~Eckstein and DP~Bertsekas.
\newblock On the {D}ouglas-{R}achford splitting method and the proximal point
  algorithm for maximal monotone operators.
\newblock \emph{Math. Progam.}, 55:\penalty0 293--318, 1992.

\bibitem[Ednersson et~al.(2018)]{rpgdlbcl}
SB~Ednersson et~al.
\newblock Expression of ribosomal and actin network proteins and
  immunochemotherapy resistance in diffuse large {B} cell lymphoma patients.
\newblock \emph{Br. J. of Haematol.}, 181:\penalty0 770--781, 2018.

\bibitem[Friedman et~al.(2008)Friedman, Hastie, and
  Tibshirani]{FriedmanHastie2008}
J~Friedman, T~Hastie, and R~Tibshirani.
\newblock Sparse inverse covariance estimation with the graphical lasso.
\newblock \emph{Biostatistics}, 9:\penalty0 432--441, 2008.

\bibitem[Gittens and Mahoney(2013)]{Gittens:2013aa}
A~Gittens and M~Mahoney.
\newblock Revisiting the {N}ystr\"om method for improved large-scale machine
  learning.
\newblock In \emph{ICML}, volume~28, pages 567--575. JMLR Workshop and
  Conference Proceedings, 2013.

\bibitem[Gnjatic et~al.(2010)]{Gnjatic5088}
S~Gnjatic et~al.
\newblock Seromic profiling of ovarian and pancreatic cancer.
\newblock \emph{PNAS}, 107:\penalty0 5088--5093, 2010.

\bibitem[Harel et~al.(2019)]{Harel297}
T~Harel et~al.
\newblock Predicting phenotypic diversity from molecular and genetic data.
\newblock \emph{Genetics}, 213:\penalty0 297--311, 2019.

\bibitem[Harville(1997)]{Harville1997}
D~Harville.
\newblock \emph{Matrix Algebra From a Statistician's Perspective}.
\newblock Springer, Secaucus, 1997.

\bibitem[Hastie et~al.(2000)Hastie, Tibshirani, et~al.]{HastieTibshirani2000}
T~Hastie, R~Tibshirani, et~al.
\newblock '{Gene shaving}' as a method for identifying distinct sets of genes
  with similar expression patterns.
\newblock \emph{Genome Biol.}, 1:\penalty0 research0003.1--research0003.21,
  2000.

\bibitem[Hastie et~al.(2001)Hastie, Tibshirani, Botstein, and
  Brown]{HastieTibshirani2001}
T~Hastie, R~Tibshirani, D~Botstein, and P~Brown.
\newblock Supervised harvesting of expression trees.
\newblock \emph{Genome Biol.}, 2:\penalty0 research0003--1, 2001.

\bibitem[Henningsson and Fontes(2018)]{HenningssonFontes2018}
R~Henningsson and M~Fontes.
\newblock {SMSSVD: SubMatrix Selection Singular Value Decomposition}.
\newblock \emph{Bioinformatics}, 35:\penalty0 478--486, 2018.

\bibitem[Hoerl and Kennard(1970)]{hoerl1970ridge}
AE~Hoerl and RW~Kennard.
\newblock Ridge regression: {B}iased estimation for nonorthogonal problems.
\newblock \emph{Technometrics}, 12:\penalty0 55--67, 1970.

\bibitem[Homrighausen and McDonald(2016)]{HomrighausenMcDonald2016}
D~Homrighausen and DJ~McDonald.
\newblock On the {N}ystr{\"o}m and column-sampling methods for the approximate
  principal components analysis of large data sets.
\newblock \emph{J. Comput. Graphical Stat.}, 25:\penalty0 344--362, 2016.

\bibitem[Hsu et~al.(2014)Hsu, Kakade, and Zhang]{HsuKakade2014}
D~Hsu, SM~Kakade, and T~Zhang.
\newblock Random design analysis of ridge regression.
\newblock \emph{Found. Comput. Math.}, 14:\penalty0 569--600, 2014.

\bibitem[Johnstone and Lu(2009)]{johnstone2009consistency}
IM~Johnstone and AY~Lu.
\newblock On consistency and sparsity for principal components analysis in high
  dimensions.
\newblock \emph{J. Am. Stat. Assoc.}, 104:\penalty0 682--693, 2009.

\bibitem[Kabir et~al.(2017)]{kabir2017identifying}
A~Kabir et~al.
\newblock Identifying maternal and infant factors associated with newborn size
  in rural {B}angladesh by partial least squares {(PLS)} regression analysis.
\newblock \emph{PloS One}, 12:\penalty0 e0189677, 2017.

\bibitem[Lazar et~al.(2013)]{lazar2013integrated}
V~Lazar et~al.
\newblock Integrated molecular portrait of non-small cell lung cancers.
\newblock \emph{BMC Med. Genomics}, 6:\penalty0 53, 2013.

\bibitem[Leek and Storey(2007)]{LeekStorey2007}
JT~Leek and JD~Storey.
\newblock Capturing heterogeneity in gene expression studies by surrogate
  variable analysis.
\newblock \emph{PLOS Genetics}, 3:\penalty0 1--12, 2007.

\bibitem[Li et~al.(2002)]{dlbclp57}
Y~Li et~al.
\newblock Aberrant {DNA} methylation of {p57KIP2} gene in the promoter region
  in lymphoid malignancies of {B}-cell phenotype.
\newblock \emph{Blood}, 100:\penalty0 2572--2577, 2002.

\bibitem[Lieberman et~al.(2020)]{Lieberman2020}
NAP Lieberman et~al.
\newblock In vivo antiviral host transcriptional response to {SARS-CoV-2} by
  viral load, sex, and age.
\newblock \emph{{PLoS} Biology}, 18:\penalty0 1--17, 09 2020.

\bibitem[Liu et~al.(2019)]{dlbclfez1}
R~Liu et~al.
\newblock Screening of key genes associated with {R-CHOP} immunochemotherapy
  and construction of a prognostic risk model in diffuse large {B}-cell
  lymphoma.
\newblock \emph{Mol. Med. Rep.}, 20:\penalty0 3679--3690, 2019.

\bibitem[Lor{\`e} et~al.(2021)]{cxcl10}
NI~Lor{\`e} et~al.
\newblock {CXCL10} levels at hospital admission predict {COVID-19} outcome:
  {H}ierarchical assessment of 53 putative inflammatory biomarkers in an
  observational study.
\newblock \emph{Mol. Med.}, 27:\penalty0 1--10, 2021.

\bibitem[Miao et~al.(2019)]{dlbclnrco}
Y~Miao et~al.
\newblock Genetic alterations and their clinical implications in {DLBCL}.
\newblock \emph{Nat. Rev. Clin. Oncol.}, 16:\penalty0 634--652, 2019.

\bibitem[Miller et~al.(2005)]{miller2005expression}
LD~Miller et~al.
\newblock An expression signature for p53 status in human breast cancer
  predicts mutation status, transcriptional effects, and patient survival.
\newblock \emph{PNAS}, 102:\penalty0 13550--13555, 2005.

\bibitem[Min et~al.(2018)Min, Liu, and Zhang]{MinLiu2018}
W~Min, J~Liu, and S~Zhang.
\newblock {Edge-group sparse PCA for network-guided high dimensional data
  analysis}.
\newblock \emph{Bioinformatics}, 34:\penalty0 3479--3487, 2018.

\bibitem[Nishihara et~al.(2015)]{NishiharaLessard2015}
R~Nishihara et~al.
\newblock A general analysis of the convergence of {ADMM}.
\newblock In \emph{ICML}, volume~37, pages 343--352. PMLR, 2015.

\bibitem[Novak et~al.(2012)]{novak2012gene}
RL~Novak et~al.
\newblock Gene expression profiling and candidate gene resequencing identifies
  pathways and mutations important for malignant transformation caused by
  leukemogenic fusion genes.
\newblock \emph{Exp. Hematol.}, 40:\penalty0 1016--1027, 2012.

\bibitem[Paul et~al.(2008)Paul, Bair, Hastie, and
  Tibshirani]{paul2008preconditioning}
D~Paul, E~Bair, T~Hastie, and R~Tibshirani.
\newblock `preconditioning' for feature selection and regression in
  high-dimensional problems.
\newblock \emph{Annals of Statistics}, 36:\penalty0 1595--1618, 2008.

\bibitem[Piironen and Vehtari(2018)]{piironen2018iterative}
J~Piironen and A~Vehtari.
\newblock Iterative supervised principal components.
\newblock In \emph{AISTATS}, volume~84 of \emph{Proceedings of Machine Learning
  Research}, pages 106--114. PMLR, 2018.

\bibitem[Rosenwald et~al.(2002)]{rosenwald2002use}
A~Rosenwald et~al.
\newblock The use of molecular profiling to predict survival after chemotherapy
  for diffuse large-{B}-cell lymphoma.
\newblock \emph{N. Engl. J. Med.}, 346:\penalty0 1937--1947, 2002.

\bibitem[Tay et~al.(2018)Tay, Friedman, and Tibshirani]{tay2018principal}
JK~Tay, J~Friedman, and R~Tibshirani.
\newblock Principal component-guided sparse regression.
\newblock \emph{arXiv:1810.04651}, 2018.

\bibitem[Tibshirani(1996)]{Tibshirani1996}
R~Tibshirani.
\newblock Regression shrinkage and selection via the lasso.
\newblock \emph{J. R. Stat. Soc. B}, 58:\penalty0 267--288, 1996.

\bibitem[Traglia et~al.(2017)]{Traglia979}
M~Traglia et~al.
\newblock Genetic mechanisms leading to sex differences across common diseases
  and anthropometric traits.
\newblock \emph{Genetics}, 205:\penalty0 979--992, 2017.

\bibitem[Van't~Veer et~al.(2002)]{van2002gene}
LJ~Van't~Veer et~al.
\newblock Gene expression profiling predicts clinical outcome of breast cancer.
\newblock \emph{Nature}, 415:\penalty0 530, 2002.

\bibitem[Vu and Lei(2013)]{VuLei2013}
VQ~Vu and J~Lei.
\newblock Minimax sparse principal subspace estimation in high dimensions.
\newblock \emph{Annals of Statistics}, 41:\penalty0 2905--2947, 2013.

\bibitem[Vu et~al.(2013)Vu, Cho, Lei, and Rohe]{vu2013fantope}
VQ~Vu, J~Cho, J~Lei, and K~Rohe.
\newblock Fantope projection and selection: A near-optimal convex relaxation of
  sparse {PCA}.
\newblock In \emph{NeurIPS}, pages 2670--2678, 2013.

\bibitem[Waugh(2014)]{waugh2014pi4kb}
MG~Waugh.
\newblock Amplification of chromosome 1q genes encoding the phosphoinositide
  signalling enzymes {PI4KB}, {AKT3}, {PIP5K1A} and {PI3KC2B} in breast cancer.
\newblock \emph{J. Cancer}, 5:\penalty0 790---796, 2014.

\bibitem[Zhang et~al.(2021)]{Zhang2021}
W~Zhang et~al.
\newblock {COVID19db}: {A} comprehensive database platform to discover
  potential drugs and targets of {COVID-19} at whole transcriptomic scale.
\newblock \emph{Nucleic Acids Res.}, 50:\penalty0 D747--D757, 2021.

\bibitem[Zhou et~al.(2020)]{Zhou_2020}
Y~Zhou et~al.
\newblock Functional dependency analysis identifies potential druggable targets
  in acute myeloid leukemia.
\newblock \emph{Cancers}, 12:\penalty0 3710, 2020.

\bibitem[Zou and Hastie(2005)]{zou2005regularization}
H~Zou and T~Hastie.
\newblock Regularization and variable selection via the elastic net.
\newblock \emph{J. R. Stat. Soc. B}, 67:\penalty0 301--320, 2005.

\bibitem[Zou et~al.(2006)]{zou2006sparse}
H~Zou et~al.
\newblock Sparse principal component analysis.
\newblock \emph{J. Comput. Graphical Stat.}, 15:\penalty0 265--286, 2006.

\end{thebibliography}

\clearpage

\appendix

 \section{\rev{Genetics data discussion}}

 \rev{In this section we provide more details on the 5 standard genetics data sets
 discussed in the manuscript. We then apply our methodology to one/two
 additional data sets which are less well studied in the literature.}

\subsection{\rev{Detailed description}}

\rev{
  In the manuscript (and below in \autoref{sec:real_more}), we analyze 5
  microarray datasets that are publicly available and widely used as benchmarks.
  Four of the datasets present messenger RNA (mRNA) abundance measurements from
  patients with breast cancer \citep{van2002gene, miller2005expression}, diffuse
  large B-cell lymphoma (DLBCL) \citep{rosenwald2002use}, and acute myeloid
  leukemia (AML) \citep{bullinger2004gene}, and the fifth reports microRNA
  (miRNA) levels from non-small cell lung cancer (NSCLC) patients
  \citep{lazar2013integrated}.
}

\rev{
  These data sets have
  between about 80 and 250 patients with expression measurements on 900 to 11000 genes.
  \autoref{tab:dataset} gives specific statistics about each dataset.
}

\begin{table}[h]
  \color{Black}
  \centering
  \begin{tabular}{@{}lrr@{}}
    \toprule
    Name & $n$ (patients) & $p$ (genes) \\
    \midrule
    Breast cancer \citep{van2002gene} & 78 & 4751 \\
    Breast cancer \citep{miller2005expression} & 253 & 11331 \\
    DLBCL \citep{rosenwald2002use} & 240 & 7399 \\
    AML  \citep{bullinger2004gene} & 116 & 6283 \\
    NSCLC \citep{lazar2013integrated} & 123 & 939 \\ 
    \bottomrule
  \end{tabular}
  \caption{Summary of canonical datasets}
  \label{tab:dataset}
\end{table}

\rev{
  The AML data was originally collected and analyzed by
  \citep{bullinger2004gene}. They used complementary-DNA microarrays to measure
  gene expression from either peripheral-blood or bone marrow samples from 116
  adults with AML. The gene expression measurements for the 6283 genes that were
  highly variable across patients are included. The observed outcomes are the
  (possibly right-censored) survival time in days as well as a binary indicator
  for whether or not the patient died.
}

 \rev{ 
  The first set of breast cancer data from \citep{van2002gene} is based on 78
  sporadic lymph-node-negative patients. The authors derived cRNA from
  snap-frozen tumor samples and pooled across each of the sporadic carcinomas.
  The 4751 available genes were significantly regulated across the patients
  (showing at least a two-fold difference across at least 5 patients). The
  outcome is the follow-up time (in months) for metastases as well as a binary
  indicator for whether prognosis was ``good'' (no metastases within 5 years) or
  ``poor''. 
}

\rev{
  The second set of breast cancer gene expression measurements is based on 253
  patients who's cancer tissues were sequenced for the p53 mutation
  \citep{miller2005expression}. A binary variable for the presence of the p53
  mutation is missing for two patients, which were removed from further
  analysis. The regression exercise focuses on the right-censored survival time
  (measured in years). The measured gene expressions come from Affymetrix
  high-density oligonucleotide arrays. A number of other clinical variables are
  also present in the data contained in the R package
  (\url{https://github.com/dajmcdon/suffpcr}). 
}

\rev{
  The DLBCL data was collected by \citep{bullinger2004gene} and contains Biopsy
  samples from 240 patients with gene expression measurements from DNA
  microarrays. The microarrays were constructed from 12,196 clones of cDNA then
  used to quantify the expression of mRNA in the tumors. Genes with significant
  differential expression across patients were included. The response is
  survival time after chemotherapy in years as well as binary survival status
  (57\% of patients died).
}

\rev{Finally, the NSCLC dataset was analysed by \citep{lazar2013integrated}. The
  expression measurements consist of from paired (tumor and non-tumor) micro RNA
  collected from 123 cancer patients. The response variable is the time to
  relapse in years. Other clinical variables are also available.}

\section{\rev{Application to COVID-19 Data}}

\begin{figure}[tb!]
  \centering
  \includegraphics[width=.7\linewidth]{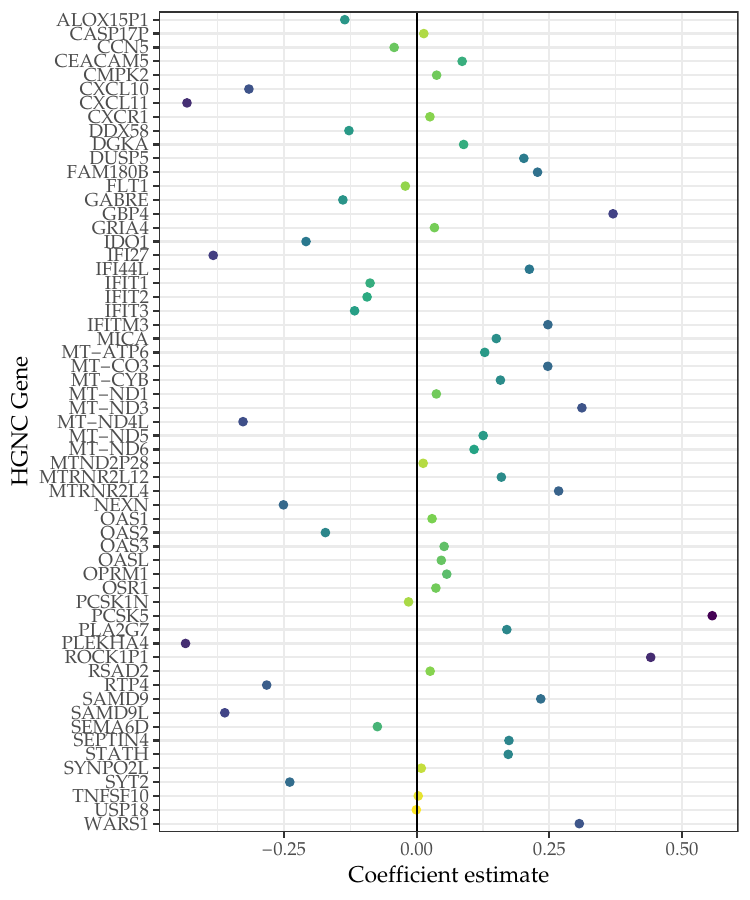}
  \caption{\rev{Coefficient estimates from \methodname on RNA-Seq measurements of
    COVID-19-positive patients. The colour corresponds to the magnitude.}}
  \label{fig:covid-19}
\end{figure}

\rev{In this section, we apply our methods to RNA-Seq data for COVID-19-positive
  patients. As a developing disease, biological analysis remains highly
  desirable, and so to facilitate this sort of undertaking, recent work has
  endeavoured to collect and distribute this data publicly \citep{Zhang2021}.}

\rev{Here, we examine the data collected by \citet{Lieberman2020}, and examine
  how well RNA-Seq measurements predict viral load. The data contains results
  for 430 PCR-confirmed COVID-positive patients and 54 negative controls.
  \citet{Lieberman2020} examines both differential expression between positives
  and 
  controls and the viral load of the positive patients, while we focus on only
  the viral load. Of the 430 confirmed positives, 413 have reported usable
  measurements. Viral load in this case is measured through a proxy: the cycle
  threshold (Ct) of the SARS-CoV-2 nucleocapsid gene region 1 (N1) target from
  a PCR test. Larger Ct values indicate more cycles are required to detect the
  N1 target, and thus, likely indicate \emph{lower} viral load. While
  \citet{Lieberman2020} bin this continuous measurement into ``high'' and
  ``low'' groups to undertake differential analysis, our methodology directly
  models the continuous response. This allows for (1) increased ability to
  detect potentially predictive genes and (2) direct quantification of the
  effect of increased expression on viral load.}

\rev{The raw expression measurements are preprocessed before being used in
  \methodname. First, we remove any genes whose median expression level across
  the 413 patients with Ct measurements is 0. This reduces the data from
  $\sim$36,000 genes to 9435. We then transform the raw counts using
  $\tilde{z}\mapsto\log(\tilde{z}+1)$. Finally, we centre and scale by the mean
  and standard deviation of
  the similarly transformed measurements from the healthy controls.
  Mathematically, if $\tilde{z}_{gt}$ is the vector of expression measurements
  for gene $g$ in the treatment group and $\tilde{z}_{gc}$ is the vector of
  expression measurements for the same gene in the control group, we form
  $$x_{gt} = \frac{\log(\tilde{z}_{gt}+1) -
    \overline{z}_{gc}}{\textrm{sd}(z_{gc})},$$ 
  where $\overline{z}_{gc} = \textrm{mean}(\log(\tilde{z}_{gc}+1))$ and
  $\textrm{sd}(z_{gc})=\textrm{sd}(\log(\tilde{z}_{gc}+1))$.
} 

\rev{
  We apply \methodname to the $\X$ matrix formed by the columnwise concatenation
  of $x_{gt}$ with the N1 Ct as the response vector. We examined embedding sizes
  of $d\in\{3,\ 5,\ 15\}$ and $\lambda$ on a $\log_{10}$-spaced grid between 0
  and 1. We chose both parameters using the minimum of the 5-fold
  cross-validation error. The result is shown in \autoref{fig:covid-19} (here
  $d=15$). Our method selected 59 genes. Many of these with the largest
  magnitude (darkest colour) are similar to those described in
  \citep{Lieberman2020}: the CXCL10 and 11 genes along with IDO1 and IFI27 are
  proinflammatory and/or interferon induced and may be related to the
  ``cytokine storm'' found in some patients \citep{cxcl10}. Novel and
  potentially interesting PLEKHA4, which is strongly related to higher viral
  load, but more closely related to melanomas. PCS5K is more strongly expressed
  in patients with lower viral load. This gene encodes a proprotein convertase
  that may potentially help to clear the virus. ROCK1P1, also more strongly
  expressed in lower viral load patients, is not well understood.  
}

\begin{figure}[tb!]
  \centering
  \includegraphics[width=.7\linewidth]{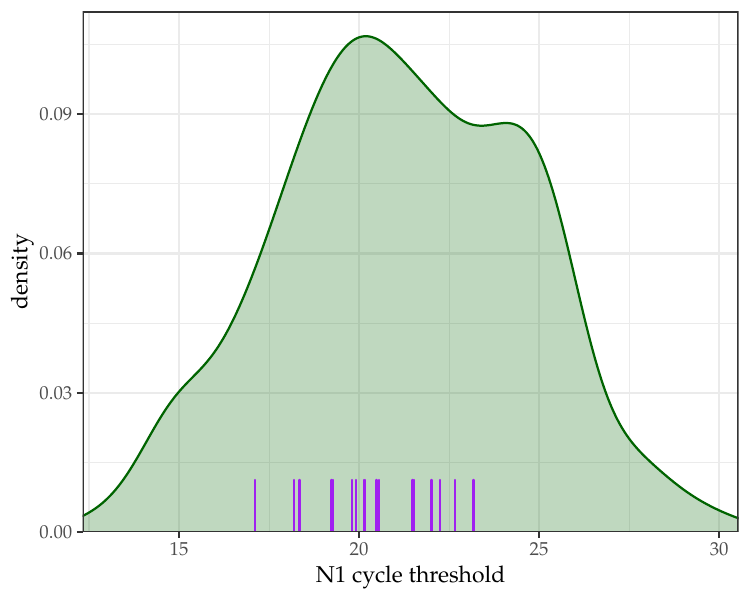}
  \caption{\rev{Density of observed N1 Ct values for 413 patients (green area) and
    predicted values for 27 positive patients with ``Unknown'' values (purple
    lines) based on \methodname estimated from the complete cases. }}
  \label{fig:covid-predictions}
\end{figure}

\rev{
  As mentioned above, 27 patients were positive but did not have a measured N1
  Ct from the PCR test. These were also missing age and gender information.
  Using the fitted \methodname model, we can predict their missing N1 cycle
  thresholds. These predictions are shown in \autoref{fig:covid-predictions}.
  The green area shows a density estimate for the 413 patients with observed
  measurements while the purple lines display predicted values for those
  positive patients whose N1 Ct values are labeled ``Unknown''. Because the data
  is missing, the accuracy of these predictions cannot be determined.
}
 
\section{Additional experiments and results for regression}

\subsection{Conditions favorable to \superPC}

This example is designed to show the performance of \methodname under favorable
conditions for \superPC.  
Here, the only alteration to the data generation process is that we set the
first 10 features to have non-zero 
$\bstar$, the same features which have non-zero marginal
correlations with the phenotype.
\rev{This is achieved using Algorithm 3 in the
manuscript with a few alterations:
(1) in line 5, we generate $\widetilde{\mathbf{V}}_d \in \R^{(d-1)\times 3}$
  (recall that we have chosen $d=3$); and (2) we replace lines 7--9 by directly simulating $\Theta
  \in \R^d$ with i.i.d.\ 
  standard normal entries.}

\autoref{fig:reg2} gives the analogous results for this example.
Here, \methodname has comparable MSE to \superPC, which is the
best, slightly better than the oracle as was the case for \methodname
above. However,
\superPC is less likely to select the correct features,
while \methodname 
selects the correct features most of the time. 
Furthermore, \methodname tends to select fewer features, and it has the best precision and recall (Ridge will always have recall equal to 1). 

\begin{figure*}[tb]
  \centering
  \includegraphics[width=0.9\linewidth]{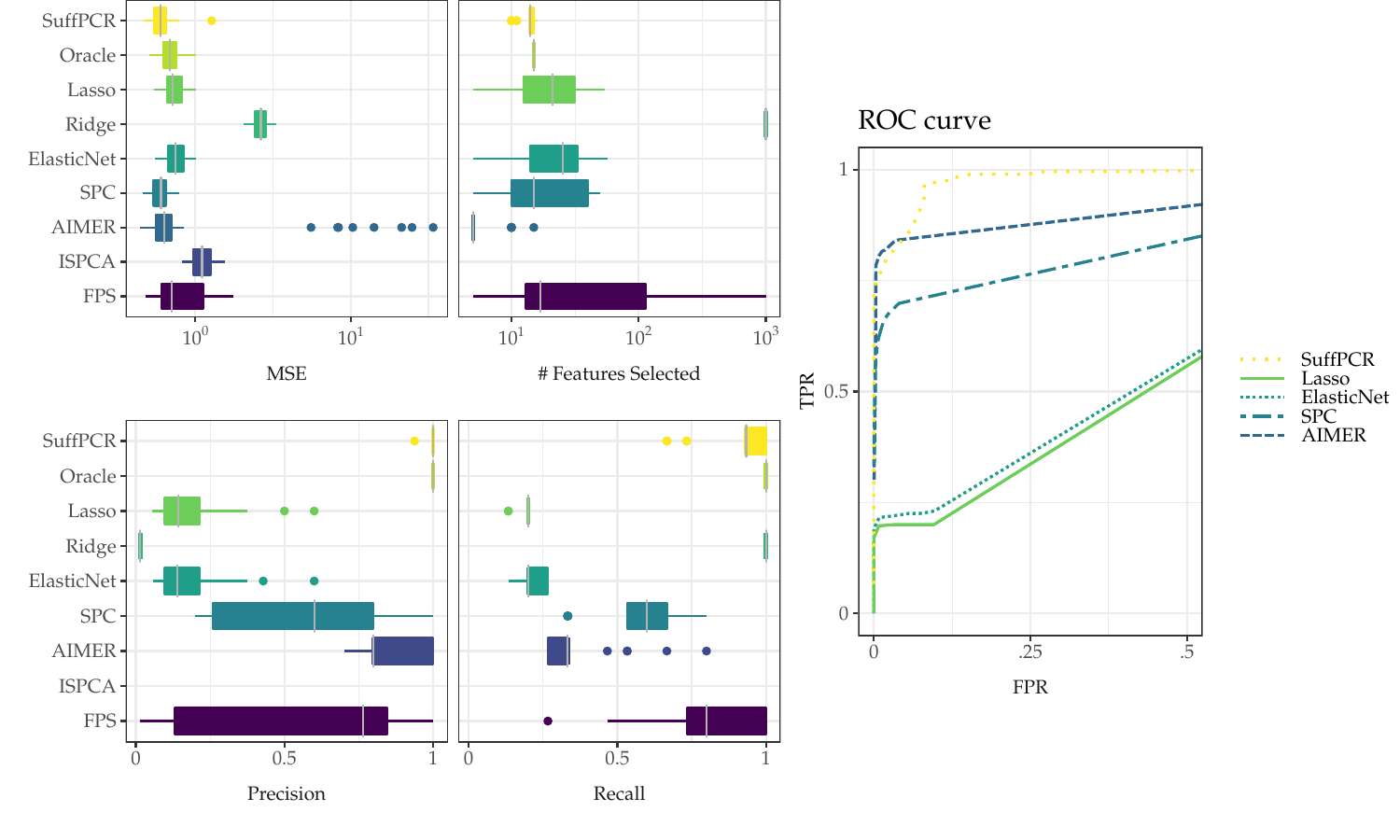}
  \caption{This figure demonstrates the performance of \methodname under
    favorable conditions for \superPC.  We have 
    omitted the other methods from the ROC curve for legibility, but their
    behavior is similar to lasso.}
  \label{fig:reg2}
\end{figure*}

\begin{figure}[tbh]
  \centering
  \includegraphics[width=0.9\linewidth]{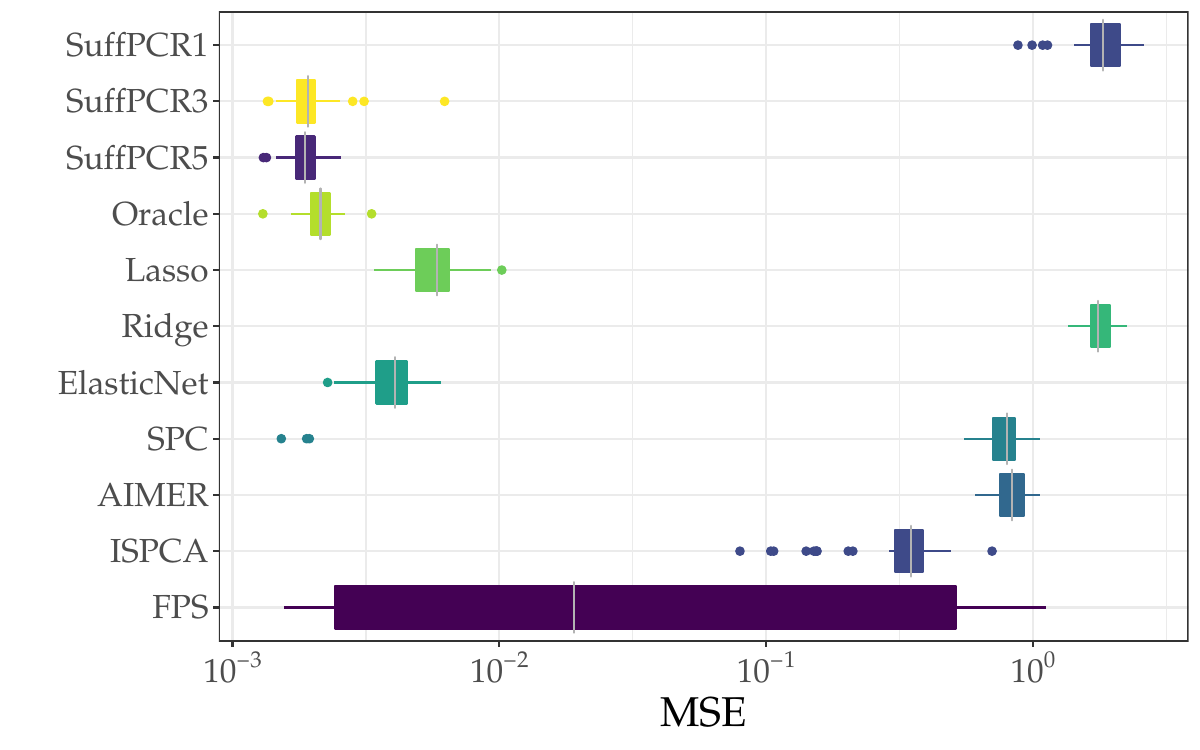}
  \caption{This figure compares the prediction MSE of \methodname
    with alternatives with different $d$ in \methodname.}
  \label{fig:reg4}
\end{figure}


\subsection{Selecting tuning parameters}

In \methodname, the tuning parameters are the penalty term $\lambda$,  the dimension of the projected subspace $d$, and the threshold $t$. 
In all the simulations, we examine the same values of $\lambda$.
When analyzing real datasets, our algorithm offers automatic calculation of potential values of $\lambda$ given the sample covariance matrix.
In practice, the more penalty parameters we explore, the better the results could be assuming the sample size is large enough that risk estimation is not too volatile.

To demonstrate the importance of selecting $d$, we simulate the data with $d = 3$ and set all the other parameters the same as in the first simulation. 
We estimate \methodname using $d\in \{1,\ 3,\ 5\}$ and display the results in
\autoref{fig:reg4}. 
When $d$ is smaller than the true value, it is hard for \methodname to capture all the information contained in $\X$, thus \methodname has terrible prediction MSE, worse than any other methods.
When $d$ is larger than the true value, \methodname remains reasonable because its bias is small, but the variance will eventually increase as $d$ increases, diminishing performance.
Selecting $t$ is not difficult in our simulations, so the results have been
omitted, but it is less trivial in the real data settings.

    
    
    

 \subsection{Investigation of signal-to-noise ratio with synthetic data}

In all or our synthetic examples, the data generating model has two sources of noise, one from constructing $\X$ corresponding to $\sigma_x$, and the other from constructing $Y$, denoted $\sigma_y$. 
This simulation aims to control the two noise sources separately to examine their effect on the performance of \methodname. 
Note that $\mathbf{X}$ is random unlike in standard
prediction studies, so both sources of
noise are important. We use $\textrm{SNR}_x$ and $\textrm{SNR}_y$ to denote the signal-to-noise ratio for $\X$ and $Y$ respectively. 
These are given by
\begin{align}
    \textrm{SNR}_x 
    &= \frac{\Expect{\norm{ \mathbf{U}_d \mathbf{\Lambda}_d \mathbf{V}_d^\top }_F}}{\Expect{\norm{ \sigma_x \mathbf{E}}_F }}  
    = \sqrt{\frac{ \mbox{tr}(\Lambda_d^2)}{ p\sigma_x^2}}, \\
    \textrm{SNR}_y 
    &= \frac{\Expect{ \norm{\X \bstar }_2}}{ \Expect{ \norm{\sigma_y Z}_2 } }
    = \sqrt{ \frac{ \beta^{*\top} \mathbf{V}_d \mathbf{\Lambda_d^2} \mathbf{V_d^\top} \bstar + \sigma_x^2 \lVert \bstar \rVert_2^2 }{  n \sigma_y^2 } }.
\end{align}
Note that $\textrm{SNR}_y$ depends not only on $\bstar$ but also on $\sigma_x$
and the linear manifold through $\mathbf{V}_d$.

We alter the values of $\textrm{SNR}_x$ and $\textrm{SNR}_y$
to generate $\X$ and $Y$ while everything else is as in the first simulation.
\autoref{fig:reg3} shows the prediction MSE for all methods on four sets of simulated data for both high and moderate combinations of  $\textrm{SNR}_x$ and $\textrm{SNR}_y$. 
When the SNR decreases, the prediction MSE of all the methods
increases as expected. In all configurations, \methodname performs
similarly to the oracle, better than Lasso or Elastic Net, while
\superPC is more similar to Ridge 
regression (though better). 
Interestingly, changing $\textrm{SNR}_x$ and $\textrm{SNR}_y$ has a similar
impact on nearly all the methods except Ridge, \superPC, AIMER, and ISPCA, which are nearly unaffected and uniformly worse by an order of magnitude.

\begin{figure*}[t]
    \centering
    \includegraphics[width=0.9\linewidth]{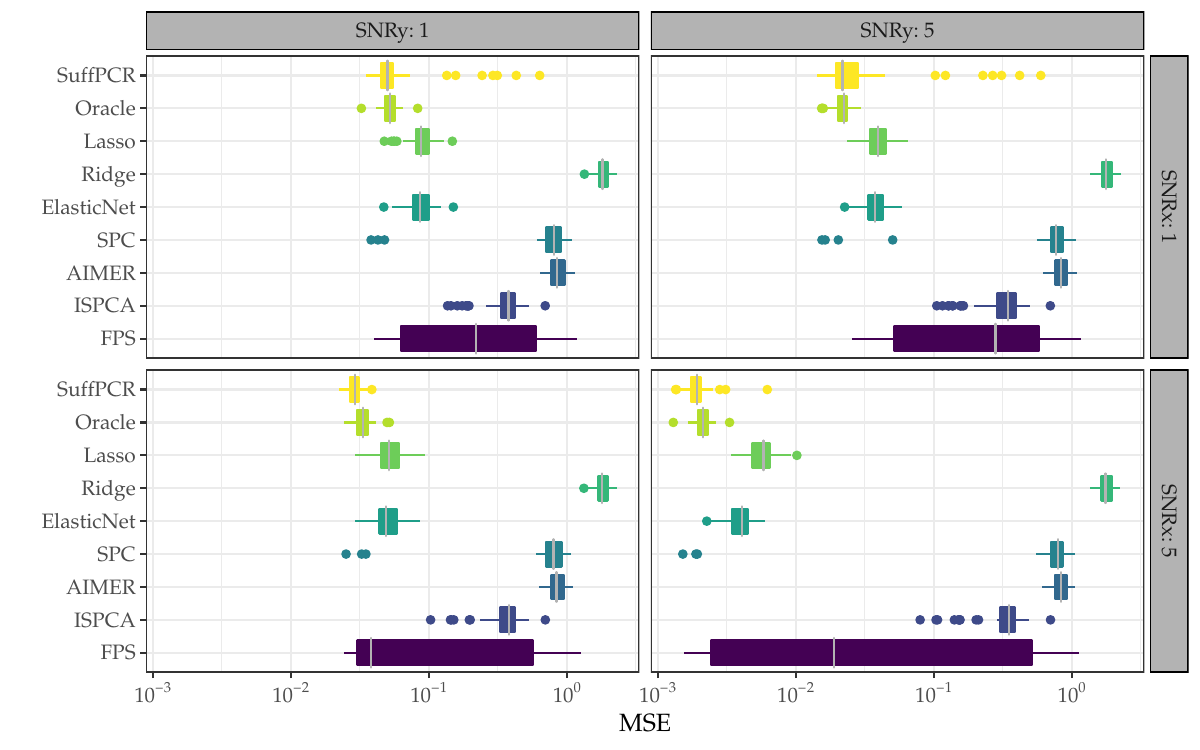}
    \caption{This figure compares the prediction MSE of \methodname with
      alternatives across different values of $\textrm{SNR}_x$ and
      $\textrm{SNR}_y$. The $x$-axis is on the log scale. Boxplots are over 50
      replications. } 
    \label{fig:reg3}
\end{figure*}

\subsection{Predictive genes/miRNA for additional cancers}
\label{sec:real_more}

\autoref{tab:DLBCLgenes} enumerates genes selected by \methodname on the DLBCL data,
\autoref{tab:AMLgenes} lists the selected genes for AML, 
\autoref{tab:NSCLCgenes} lists the miRNA sequences selected by \methodname on the NSCLC data, while \autoref{tab:BreastCancer1genes} lists all the features selected by \methodname on the Breast Cancer1 data.
\autoref{tab:reg2} gives the prediction MSE for Ridge, Random Forests
and SVM for the 5 regression datasets we examine in Section 3.3 of the
manuscript.
These methods predict well in general, but they do not select
features.

For AML we describe the related work summarized in \autoref{tab:AMLgenes} in
more detail. Of the 4 discovered genes,  3 have
been discussed in the literature. \textit{BPGM}, encoding a
multifunctional metabolic enzyme restricted to erythrocytes and
placental cells, is upregulated in mouse models of AML
\citep{novak2012gene}. \textit{PI4KB}, encoding a lipid kinase, is
amplified in breast cancer \citep{waugh2014pi4kb} and has been
proposed to be a druggable AML-specific dependency
\citep{Zhou_2020}. \textit{LOC90379} encodes an uncharacterized
protein that is highly immunogenic in ovarian cancer serum
\citep{Gnjatic5088}. The fourth feature discovered, Hs.321434, is an
EST (GenBank accession H96229) that corresponds to an intronic
sequence of an uncharacterized long noncoding RNA (lncRNA),
LOC101929579, and is thus of uncertain significance. 
Similar listings for the discovered genes for the NSCLC and Breast
Cancer 1 data are given in the Supplement without the accompanying
literature review.

Finally, \autoref{tab:dlbcl-proteins} expands on the listing in the
main document, listing the genes encoding ribosomal proteins and MHCII
protein which predict DLBCL survival as selected by \methodname.

\begin{table}
\centering
\begin{tabular}{@{}lcl@{}}
    \toprule
     Gene 
     &  Related ? & Reference\\
    \midrule
    Ribosomal protein genes (17) 
    & \cmark & \citet{rpgdlbcl}\\
    MHCII (9)
    & \cmark & \citet{dlbclmhcii}\\
    CORO1A 
    & \cmark & \citet{dlbclp57}\\
    FEZ1
    & \cmark & \citet{dlbclfez1}\\
    RAG1 
    & \cmark & \cite{dlbclnrco}\\
    RYK 
    & \xmark & \\
    CXCL5 
    & \xmark & \\
    ESTs Hs.22635, Hs.343870
    & \xmark & \\
    \bottomrule
\end{tabular}
\caption{Predictive genes for DLBCL selected by \methodname.}
\label{tab:DLBCLgenes}
\end{table}

\begin{table}
\centering
\begin{tabular}{@{}lcl@{}}
    \toprule
     Gene 
     &  Related ? & Reference\\
    \midrule
    BPGM 
    & \cmark & \cite{novak2012gene}\\
    PI4KB 
    & \cmark & \cite{Zhou_2020}\\
    EST Hs.321434
    & \xmark & \\
    LOC90379 
    & \cmark & \cite{Gnjatic5088}\\
    \bottomrule
\end{tabular}
\caption{Predictive genes for AML selected by \methodname.}
\label{tab:AMLgenes}
\end{table}

\setlength{\tabcolsep}{9pt}
\begin{table}[h]
\centering
\begin{tabular}{@{}llll@{}}
    \toprule
     \# & Sequence & \# & Sequence \\
    \midrule
    1 & hsa-miR-376c&           21 & hsa-miR-409-5p \\             
    2 & hsa-miR-320c&           22 & hcmv-miR-US4   \\           
    3 & hsa-miR-299-3p&         23 & hsa-miR-376a   \\           
    4 & hsa-miR-154&            24 & hsa-miR-1471\\              
    5 & hsa-miR-410&            25 & hsa-miR-411   \\            
    6 & hsa-miR-1182&           26 & bkv-miR-B1-5p  \\           
    7 & hsa-miR-136&            27 &  hsa-miR-377   \\           
    8 & hsa-miR-379&            28 & hsa-miR-601\\               
    9 & hsa-miR-765&            29 & hsa-miR-299-5p \\           
    10 & hsa-miR-610&           30 & hsa-miR-543    \\           
    11 & hsa-miR-487a &         31 & hsa-miR-381    \\           
    12 & hsa-miR-136*&          32 & hsa-miR-329\\               
    13 & hsv1-miR-H1  &         33 & hsa-miR-760    \\           
    14 & hsa-miR-622  &         34 & hsa-miR-409-3p \\           
    15 & hsa-miR-659  &         35 & hsa-miR-617    \\           
    16 & hsa-miR-376b&          36 & hsa-miR-758\\               
    17 & hsa-miR-154*  &        37 & hsa-miR-1183   \\           
    18 & hsa-miR-483-5p &       38 & hsa-miR-671-5p \\           
    19 & hsa-miR-337-5p &       39 & hsa-miR-127-3p\\            
    20 & hsa-miR-376a*\\
    \bottomrule
\end{tabular}
\caption{RNA sequences selected by \methodname for NSCLC data.}
\label{tab:NSCLCgenes}
\end{table}

\begin{table}[h]
\centering
\begin{tabular}{@{}llll@{}}
    \toprule
     \# & Feature & \# & Feature \\
    \midrule
    1 & NM\_003118 & 15 & NM\_003247  \\    
    2 & Contig46244\_RC & 16 & NM\_004079\\      
    3 & Contig55801\_RC & 17 & NM\_002775\\      
    4 & M37033 & 18 & NM\_004369     \\ 
    5 & NM\_004385 & 19 & NM\_002985 \\    
    6 & Contig43613\_RC & 20 & Contig43833\_RC \\
    7 & Contig42919\_RC & 21 & NM\_005565\\
    8 & NM\_016081 & 22 & Contig52398\_RC \\
    9 & NM\_016187 & 23 & NM\_006889    \\  
    10 & Contig30260\_RC & 24 & Contig66347   \\
    11 & NM\_000089 & 25 & NM\_000090  \\    
    12 & NM\_000138 & 26 & Contig25362\_RC\\ 
    13 & NM\_000393  & 27 & NM\_000560  \\    
    14 & Contig58512\_RC & 28 & NM\_001387\\
    \bottomrule
\end{tabular}
\caption{Features selected by \methodname for Breast Cancer1 data.}
\label{tab:BreastCancer1genes}
\end{table}

\begin{table*}[tb]
  \centering
  \resizebox{\textwidth}{!}{
    \begin{tabular}{@{} l r r c r r c r r c r r c r r @{}}
    \toprule
    & \multicolumn{2}{c}{Breast Cancer1}  &\phantom{}&        \multicolumn{2}{c}{Breast Cancer2}  &\phantom{}& \multicolumn{2}{c}{DLBCL} &\phantom{}&  \multicolumn{2}{c}{AML}  &\phantom{}&   \multicolumn{2}{c}{NSCLC}  \\
    \cline{2-3} \cline{5-6} \cline{8-9} \cline{11-12} \cline{14-15}
    Method  &  MSE  & feature\# &&  MSE  & feature\#   &&  MSE  & feature\#  && MSE  & feature\#  && MSE  & feature\#  \\
    \midrule
    Ridge   & 0.6331   & 4751  && 0.4330  & 11331 && 0.6766   & 7399   && 2.0451   & 6283   && 0.2105  & 939 \\
    Random Forest  & \textbf{0.5519} & 4751 && \textbf{0.3976} & 11331 && \textbf{0.6613} & 7399 && 2.0276 & 6283 && \textbf{0.2059} & 939  \\
    SVM linear  & 0.6375 & 4751 && 0.4009 & 11331 && 0.7414 & 7399 && 2.5637 & 6283 && 0.2819 & 939  \\
    SVM RBF  & 0.5602 & 4751 && 0.4448 & 11331 && 0.6684 & 7399 && \textbf{2.0106} & 6283 && 0.2080 & 939  \\
    \bottomrule
    \end{tabular}
    }
        \caption{Prediction MSE for alternative methods in real genomics data analysis.}
    \label{tab:reg2}
  \end{table*}

\section{Extension to classification}
\label{sec:classification}

\methodname is easily extended to solve classification tasks using logistic regression (or even other classification methods). 
We use logistic regression as an example to show how \methodname performs for classification tasks.
Note that the algorithm is similar to Algorithm 1 except that step 12 is replaced by the objective of logistic regression.

\subsection{Synthetic data for binary classification}
\label{sec:classification_sim}

We first use a simulation to demonstrate the generalization of  \methodname to solve a binary classification problem. 
Using the same data generating model and parameters as in the favorable scenario in Section 3.1.1 of the manuscript, we generate $Y$ using the logistic function. 
As before, we choose tuning parameters with the validation set, and report the overall prediction accuracy on the test set.

\autoref{fig:cla1} shows the classification accuracy compared to the oracle (logistic regression on the true predictors), logistic lasso, logistic ridge, and logistic FPS (logistic regression on the estimated principal components from FPS). 
\methodname has very high classification accuracy on the test set relative to the alternative methods. We also simulate classification data analogously to the other scenarios discussed in the manuscript, and the results are very similar.

\begin{figure}[h]
  \centering
  \includegraphics[width=0.75\linewidth]{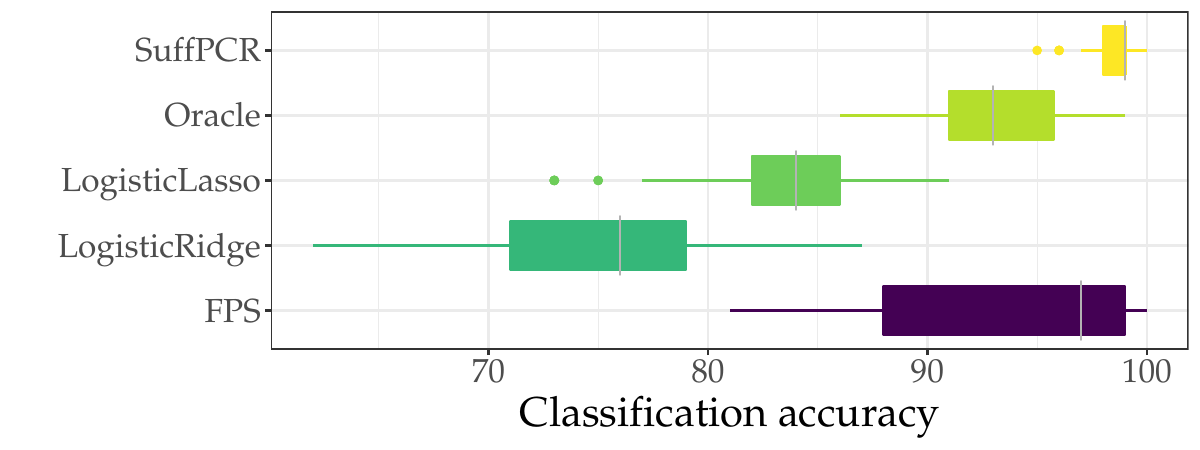}
  \caption{This figure compares the classification accuracy of \methodname with
    alternative methods. The $x$-axis shows the percentage of correct
    classifications on the test set. Boxplots are over 50 repeated simulations.} 
  \label{fig:cla1}
\end{figure}

\subsection{Analysis of real genomics data}
\label{sec:classification_real}

We analyze 3 of the 5 real genomics datasets from Section 3.3 of the manuscript, which include a binary survival status. 
We use the same training test split process as in the manuscript and compare \methodname with logistic lasso and logistic ridge.

\begin{table*}[tbh]
    \centering
    \begin{tabular}{@{}l r r c r r c r r@{}}
    \toprule
    & \multicolumn{2}{c}{Breast cancer1}  &\phantom{}& \multicolumn{2}{c}{DLBCL}  &\phantom{}& \multicolumn{2}{c}{AML}\\
    \cline{2-3} \cline{5-6} \cline{8-9} 
    Method  &  Acc.\%  &  feature \#    &&  Acc.\%  &  feature \#   &&  Acc.\%  &  feature \#  \\
    \midrule
    \methodname   & \textbf{60.95}   & 100  && 60.43   & 103  && \textbf{59.39}   & 95   \\    
    Logistic Lasso   & 60.00   & 7   && 56.86   & 5   && 57.57  & 9  \\
    Logistic Ridge     & 58.09   & 4751   && \textbf{61.00}   & 7399   && 56.97   & 6283   \\
    \bottomrule
    \end{tabular}
        \caption{Accuracy of survival status prediction and number of selected genes for the classification tasks on real data. Results are averaged over 10 repeated iterations.}
    \label{tab:cla}
\end{table*}

\autoref{tab:cla} shows the average classification accuracy and average number of selected genes in the classification tasks.
The results in the classification tasks are similar to those in
regression. Again, the DLBCL dataset tends to be difficult for both
sparse and PC-based methods.

\begin{table}
  \centering
  \begin{tabular}{@{}lll@{}}
    \toprule
    GenBank AN & description\\
    \midrule
    M17886 & ribosomal protein, large, P1\\
    S79522 & ribosomal protein S27a\\
    X03342 & ribosomal protein L32\\
    D23661 & ribosomal protein L37\\
    M84711 & ribosomal protein S3A\\
    L06498 & ribosomal protein S20\\
    U14973 & ribosomal protein S29\\
    M17887 & ribosomal protein, large P2\\
    L04483 & ribosomal protein S21\\
    M64716 & ribosomal protein S25\\
    L19527 & ribosomal protein L27\\
    X66699 & ribosomal protein L37a\\
    U14970 & ribosomal protein S5\\
    X64707 & ribosomal protein L13\\
    U14971 & ribosomal protein S9\\
    M60854 & ribosomal protein S16\\
    NM\_022551 & ribosomal protein S18\\
    X00452 & major histocompatibility complex, class II, DQ alpha 1\\
    X00457 & major histocompatibility complex, class II, DP alpha 1\\
    X62744 & major histocompatibility complex, class II, DM alpha\\
    U15085 & major histocompatibility complex, class II, DM beta\\
    M16276 & major histocompatibility complex, class II, DQ beta 1\\
    K01171 & major histocompatibility complex, class II, DR alpha\\
    M20430 & major histocompatibility complex, class II, DR beta 5\\
    M83664 & major histocompatibility complex, class II, DP beta 1\\
    K01144 & CD74 antigen (invariant polypeptide of major\\
    &\quad histocompatibility complex, class II antigen-associated)\\
    \bottomrule
  \end{tabular}
  \caption{A listing of the genes encoding ribosomal proteins and MHCII
protein which predict DLBCL survival as selected by \methodname.\label{tab:dlbcl-proteins}}
\end{table}

\section{Approximate singular value decomposition in Algorithm 1}

To approximate the top $j$ eigenvalues of a symmetric matrix $\mathbf A$,
we require the $k$-step (partial) Lanczos bidiagonalization $(j<k)$. 
For initial vector $p_1$, this
is given by
\begin{align}
  \mathbf{AP}^{(k)}& = \mathbf{Z}^{(k)}\mathbf{ W}^{(k)}, \;
                 \mathbf{A}^\top\mathbf{Z}^{(k)} =
                     \mathbf{P}^{(k)}\mathbf{W}^{(k),\top} + r^{(k)} e_k^\top\\
\end{align}
where $\mathbf{P}^{(k),\top}\mathbf{P}^{(k)} =
\mathbf{Z}^{(k),\top}\mathbf{Z}^{(k)}=\mathbf{I} \in \R^{k\times k}$, $\mathbf{P}^{(k),\top}r^{(k)}=0$,
$\mathbf{P}^{(k)} e_1=p_1$, and $\mathbf{W}^{(k)}$ is
bidiagonal. Approximate eigenvectors and eigenvalues for $\mathbf A$
can then be computed using the SVD for $\mathbf W$, which is easy
because of the bidiagonal structure. However, the accuracy is closely
tied to the choice of the initial vector $p_1$ and can be measured by
the norm of the residual vector $r^{(k)}$. AIRLB
\citep{BaglamaReichel2005} essentially augments the SVD
of $\mathbf W$ with additional information to reexpress $\mathbf
P^{(k)}$, $\mathbf Z^{(k)}$ and $\mathbf W^{(k)}$. This process is
repeated until the residual is deemed small enough.

If $p_1$ is in the span of the top $j$ eigenvectors of
$\mathbf A$, then no iteration will be necessary, and the
approximation is exact. So, within our ADMM, when the span of the eigenvectors for
$\mathbf B-\mathbf C + \mathbf S/\rho$ is similar across iterations,
previous iterates can be used as initializations for 
the restarted Lanczos procedure. Thus,
as we loop through the ADMM steps, these initializations improve the speed
subsequent decompositions.

This modification significantly improves the
per-iteration efficiency of the algorithm while still converging in a
few dozen iterations.
Note that ADMM will still converge (though in perhaps more iterations)
if some or all of the steps are implemented approximately~\citep{EcksteinBertsekas1992}. In
particular, by approximating the projection of $\mathbf{Q}$, $\mbox{Proj}_{\mathcal{F}^d}(\mathbf{Q})$, with $\widetilde{\mbox{Proj}}_{\mathcal{F}^d}(\mathbf{Q})$, ADMM will converge provided 
\[
\sum_{k=1}^\infty \norm{\mbox{Proj}_{\mathcal{F}^d}(\mathbf{Q}^{(k)}) - \widetilde{\mbox{Proj}}_{\mathcal{F}^d}(\mathbf{Q}^{(k)})} < \infty.
\]
 \citet{NishiharaLessard2015} suggests linear
convergence for ADMM, and our experience is that
this case remains linear despite the approximation.

 \section{Proofs}
\label{sec:proof}

 To prove Theorem 1, we first need the following technical lemma.
\begin{lemma}
  \label{lem:rotate-or-project}  Let $\boldsymbol{\Xi}=\mathbf{ZZ}^\top$ be the orthogonal projector on to the $d$-dimensional
  subspace of $\R^p$ spanned by $\mathbf{Z}$ with $\mathbf{Z}^\top \mathbf{Z}=\mathbf{I}_d$. Let
  $b\in\R^n$ and $\mathbf{A}\in\R^{n\times p}$. Then, defining $\hat x := \argmin_x \norm{\mathbf{AZ} x-b}_2^2$,
  \begin{align}
    \mathbf{Z}\hat x &= \mathbf{Z} (\mathbf{AZ})^+b = (\mathbf{A}\boldsymbol{\Xi})^+b\\
                      &= \argmin_y \norm{\mathbf{A}\boldsymbol{\Xi} y-b}_2^2 =:\hat y,
  \end{align}
  where $\mathbf{Q}^+$ is the Moore-Penrose generalized inverse of $\mathbf{Q}$.
\end{lemma}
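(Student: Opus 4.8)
The plan is to reduce everything to the standard fact that $\mathbf{Q}^+ b$ is precisely the minimum-norm minimizer of $\norm{\mathbf{Q}x - b}_2^2$, and then to establish the single genuinely matricial identity $\mathbf{Z}(\mathbf{AZ})^+ = (\mathbf{A}\boldsymbol{\Xi})^+$. With $\mathbf{M} := \mathbf{AZ} \in \R^{n\times d}$, the first equality $\mathbf{Z}\hat x = \mathbf{Z}(\mathbf{AZ})^+ b$ is then immediate, since the least-squares problem defining $\hat x$ has (minimum-norm) solution $\hat x = \mathbf{M}^+ b$ and I just left-multiply by $\mathbf{Z}$. The final equality is the same characterization applied to $\mathbf{A}\boldsymbol{\Xi}$: under the convention that $\argmin$ selects the minimum-norm minimizer, $\hat y = (\mathbf{A}\boldsymbol{\Xi})^+ b$. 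This convention is needed here because $\mathbf{A}\boldsymbol{\Xi}$ always has a nontrivial kernel, namely the orthogonal complement of the column space of $\mathbf{Z}$.

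The crux is therefore the middle equality. Because the reverse-order law $(\mathbf{M}\mathbf{Z}^\top)^+ = (\mathbf{Z}^\top)^+\mathbf{M}^+$ fails for general factors, I would avoid invoking it and instead verify directly that $\mathbf{G} := \mathbf{Z}\mathbf{M}^+$ satisfies the four defining Penrose conditions for the pseudoinverse of $\mathbf{N} := \mathbf{A}\boldsymbol{\Xi} = \mathbf{M}\mathbf{Z}^\top$. The orthonormality $\mathbf{Z}^\top\mathbf{Z} = \mathbf{I}_d$ does all the work: it collapses $\mathbf{N}\mathbf{G} = \mathbf{M}\mathbf{Z}^\top\mathbf{Z}\mathbf{M}^+ = \mathbf{M}\mathbf{M}^+$, the symmetric orthogonal projector onto the column space of $\mathbf{M}$, which gives the symmetry condition on $\mathbf{N}\mathbf{G}$ and, via $\mathbf{M}\mathbf{M}^+\mathbf{M} = \mathbf{M}$, the condition $\mathbf{N}\mathbf{G}\mathbf{N} = \mathbf{N}$. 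Symmetrically, $\mathbf{G}\mathbf{N} = \mathbf{Z}(\mathbf{M}^+\mathbf{M})\mathbf{Z}^\top$ is symmetric because $\mathbf{M}^+\mathbf{M}$ is, and $\mathbf{M}^+\mathbf{M}\mathbf{M}^+ = \mathbf{M}^+$ yields $\mathbf{G}\mathbf{N}\mathbf{G} = \mathbf{G}$. By uniqueness of the Moore--Penrose inverse, $\mathbf{G} = \mathbf{N}^+$, that is, $\mathbf{Z}(\mathbf{AZ})^+ = (\mathbf{A}\boldsymbol{\Xi})^+$.

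I expect the only real subtlety -- more a matter of bookkeeping than a hard obstacle -- to be the non-uniqueness on the right: the argmin over $y$ is a whole solution set, so I must state explicitly that the pseudoinverse selects its minimum-norm element, and note that $\mathbf{Z}\hat x$ lands in the column space of $\mathbf{Z}$, which is exactly $\ker(\mathbf{A}\boldsymbol{\Xi})^\perp$ when $\mathbf{AZ}$ has full column rank. This identifies $\mathbf{Z}\hat x$ with the minimum-norm solution, so the two sides agree as vectors rather than merely modulo the kernel. Once the Penrose verification is in place no further estimates are needed, and the chain of equalities follows.
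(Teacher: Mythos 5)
Your proposal is correct, and it reaches the conclusion by a route that differs from the paper's in how the key identity is established. The paper's proof is a three-line chain of equalities: it starts from $\hat y := (\mathbf{A}\boldsymbol{\Xi})^+ b$, inserts the projector via the idempotent-matrix property $(\mathbf{A}\boldsymbol{\Xi})^+ = \boldsymbol{\Xi}(\mathbf{A}\boldsymbol{\Xi})^+$, and then invokes a cited textbook result (Harville, Thm.\ 20.5.6, a reverse-order-law statement for pseudoinverses of products with an orthonormal factor) to convert $(\mathbf{A}\mathbf{Z}\mathbf{Z}^\top)^+$ into $\mathbf{Z}(\mathbf{AZ})^+$, finishing with $\mathbf{Z}^\top\mathbf{Z}=\mathbf{I}_d$. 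You instead prove the crucial matrix identity $\mathbf{Z}(\mathbf{AZ})^+ = (\mathbf{A}\boldsymbol{\Xi})^+$ from scratch by checking all four Penrose conditions for $\mathbf{G}=\mathbf{Z}\mathbf{M}^+$ against $\mathbf{N}=\mathbf{M}\mathbf{Z}^\top$, which is exactly the right move given that the general reverse-order law fails; your verification is correct ($\mathbf{N}\mathbf{G}=\mathbf{M}\mathbf{M}^+$ and $\mathbf{G}\mathbf{N}=\mathbf{Z}\mathbf{M}^+\mathbf{M}\mathbf{Z}^\top$ are symmetric, and the two product conditions collapse via $\mathbf{Z}^\top\mathbf{Z}=\mathbf{I}_d$). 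What your approach buys is self-containedness and a statement that holds at the matrix level with no rank assumptions; what the paper's buys is brevity, at the cost of leaning on an external citation. You are also more careful than the paper on the one genuine subtlety: the $\argmin$ over $y$ (and, if $\mathbf{AZ}$ is rank-deficient, over $x$) is a set, so the equalities require the minimum-norm convention, which the paper adopts only implicitly by defining $\hat y$ directly as $(\mathbf{A}\boldsymbol{\Xi})^+b$. Your closing remark restricting to the full-column-rank case is actually unnecessary given your Penrose argument plus the min-norm convention, but it is not an error.
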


\begin{proof}[Proof of Lemma 1]
  We have
  \begin{align}
    \hat y &:= (\mathbf{A}\boldsymbol{\Xi})^+ b =
             \boldsymbol{\Xi}(\mathbf{A}\boldsymbol{\Xi})^+ b =
             \mathbf{ZZ}^\top(\mathbf{AZZ}^\top)^+b\\
           &=\mathbf{ZZ}^\top\mathbf{Z}(
             \mathbf{AZ})^+ b = \mathbf{Z}(\mathbf{AZ})^+b =: \mathbf{Z}\hat x.
  \end{align}
  Here, the first equality is a standard property of idempotent
  matrices and the third follows from \citep[Thm.\ 20.5.6]{Harville1997}.
\end{proof}

\begin{proof}[Proof of Theorem 1]
Write $\boldsymbol{\Pi} = \mathbf{V}_d \mathbf{V}_d^\top$ and $\hat{\boldsymbol{\Pi}} = \hat{\mathbf{V}}_d \hat{\mathbf{V}}_d^\top$ for the orthogonal projectors onto the column span of the population quantity $\mathbf{V}_d$ and the estimate produced by Algorithm 2 respectively. Let
\[
\tilde{\gamma} = \argmin_\gamma \norm{\mathbf{XV}_d\gamma - Y}_2^2,
\]
and define $\tilde{\beta} = \mathbf{V}_d \tilde{\gamma}$. 
Note that $\beta_* \in \mbox{col}(\mathbf{V}_d)$ and $\hat{\beta} \in
\mbox{col}(\hat{\mathbf{V}}_d)$ as
$\beta_*=\mathbf{V}_d\mathbf{L}_d^{-1}\boldsymbol{\Lambda}_d\Theta =:
\mathbf{V_d}r$ and $\hat{\beta}=\mathbf{\hat{V}}_d\hat{\gamma}$.
Then,
\begin{align}
&\norm{\mathbf{X}(\beta_*-\hat{\beta})}_2^2 \\
&= \norm{\mathbf{X}(\beta_*-\tilde{\beta})}_2^2 + \norm{\mathbf{X}(\tilde{\beta}-\hat{\beta})}_2^2\\
&= \norm{\mathbf{X}\left(\mathbf{V}_d \gamma_* - \mathbf{V}_d \tilde{\gamma}\right)}_2^2 + \norm{\mathbf{X}\left(\mathbf{V}_d \tilde{\gamma} - \hat{\mathbf{V}}_d \hat{\gamma}_d\right)}_2^2\\
&= \norm{\mathbf{X}\mathbf{V}_d \left(\gamma_* - \tilde{\gamma}\right)}_2^2 + \norm{\mathbf{X}\left(\boldsymbol{\Pi}\tilde{\beta} - \hat{\boldsymbol{\Pi}} \hat{\beta}\right)}_2^2,
\end{align}
where the last line follows by Lemma 1.

Now, 
$\mathbf{X}\hat\beta = \hat{Y} \in \textrm{col}(X\hat{\boldsymbol{\Pi}})$ 
and $\mathbf{X}\tilde{\beta}=\tilde{Y} \in \textrm{col}(\mathbf{X}\boldsymbol{\Pi})$, and
$\mbox{col}(\mathbf{X}\boldsymbol{\Pi}) \subset \mbox{col}(X) \cap \mbox{row}(\boldsymbol{\Pi})=\mbox{col}(\mathbf{X}) \cap \mbox{col}(\boldsymbol{\Pi})$ (because $\boldsymbol{\Pi}$ is symmetric). Thus, we have that
$\hat{Y} \in \mbox{col}(\mathbf{X}\hat{\boldsymbol{\Pi}}) \subset \mbox{col}(\mathbf{X}) \cap \mbox{col}(\hat{\boldsymbol{\Pi}})$
and $\tilde{Y} \in \mbox{col}(\mathbf{X}\boldsymbol{\Pi}) \subset \mbox{col}(\mathbf{X}) \cap \mbox{col}(\boldsymbol{\Pi})$.
By linearity, there exist orthogonal projectors $\mathbf{Q}$ and $
\mathbf{R}$ such that $\hat{Y} = \mathbf{Q}Y$ and $\tilde{Y} = \mathbf{R}Y$. Therefore,
\begin{align}
\norm{\mathbf{X}\left(\boldsymbol{\Pi}\tilde{\beta} - \hat{\boldsymbol{\Pi}} \hat{\beta}\right)}_2^2 
&= \norm{\hat{Y}-\tilde{Y}}_2^2 \\
&= \norm{(\mathbf{Q}-\mathbf{R})Y}_2^2\\
&\leq \norm{\mathbf{Q}-\mathbf{R}}_F^2 \norm{Y}_2^2 \\
&\leq \norm{\boldsymbol{\Pi}-\hat{\boldsymbol{\Pi}}}_F^2 \norm{Y}_2^2,
\end{align}
where the first inequality is H\"older's inequality for the Frobenius norm. For the second, since $\mathbf{Q}$ and $\mathbf{R}$ are (orthogonal) projectors onto subspaces of $\mbox{col}(\boldsymbol{\Pi})$ and $\mbox{col}(\hat{\boldsymbol{\Pi}})$, it must be that $\norm{\mathbf{Q}-\mathbf{R}}_F \leq \norm{\boldsymbol{\Pi}-\hat{\boldsymbol{\Pi}}}_F$.

Now, since $\hat{\boldsymbol{\Pi}}$ is a solution to the Fantope projection problem under Assumptions A1--A6, we can invoke \citep[Cor. 3.3]{vu2013fantope} to get that 
\[
\norm{\boldsymbol{\Pi}-\hat{\boldsymbol{\Pi}}}_F = \mathcal{O}_P\left(s\sqrt{\log(p)/n}\right)
\]
while $\norm{Y}^2/n=\mathcal{O}_P(1)$. 

Turning now to $\norm{\mathbf{X}\mathbf{V}_d \left(\gamma_* - \tilde{\gamma}\right)}_2^2$, $\tilde\gamma$ is simply the ordinary least squares regression estimate under random design. Thus by, for example, \citep[Theorem 1 and subsequent remarks]{HsuKakade2014}, 
\[
\norm{\mathbf{X}\mathbf{V}_d \left(\gamma_* - \tilde{\gamma}\right)}_2^2 = \mathcal{O}_P\left(\sigma^2 d/n\right).
\]
Combining these terms gives the result.

\end{proof}

\end{document}